\journal{AxRiv}
\newtheorem{proposition}{Proposition}
\newtheorem{lemma}[proposition]{Lemma}
\newtheorem{theorem}[proposition]{Theorem}
\newtheorem{corollary}[proposition]{Corollary}
\newdefinition{remark}{Remark}
\newproof{proof}{Proof}
\begin{document}

\begin{frontmatter}



\title{Counting and Enumerating Tree-Child Networks and Their Subclasses}


\author[GC]{Gabriel Cardona}\ead{gabriel.cardona@uib.es}
\author[LZ]{Louxin Zhang}\ead{matzlx@nus.edu.sg}

\address[GC]{Department of Mathematics and Computer Science, University of the Balearic Islands,\\ Ctra. Valldemossa km 7,5. E-07120 Palma, Spain}
\address[LZ]{Department of Mathematics, National University of Singapore,\\
10 Kent Ridge Road, Singapore 119076}

\begin{abstract}
Galled trees are studied as a recombination model in population genetics.  This class of phylogenetic networks is generalized into tree-child, galled and reticulation-visible network classes by relaxing a structural condition imposed on galled trees.   We provide a solution to an open problem that is how to count and enumerate  tree-child networks.   Explicit counting formulas are also given  for galled trees  through their relationship to ordered trees and phylogenetic networks in which the child of each reticulation is a leaf. 
\end{abstract}



\begin{keyword}
    Galled networks\sep normal networks\sep  tree-child networks\sep tree-based networks
%
%
%
%
\end{keyword}

\end{frontmatter}



\section{Introduction}
Phylogenetic networks have been used often to model gene and genome evolution over recent years \cite{Gusfield_book, Lake_99}. A rooted phylogenetic network (RPN) is a rooted directed acyclic graph (DAG) where all the sink nodes are of indegree 1, called the leaves, and where there is a unique source node,  called the root.  The leaves represent a set of taxa (e.g, species, genes or individuals in a population) and the root represents the least common ancestor of the taxa.  Moreover,  in a RPN,  non-leaf and non-root nodes are divided into two classes: tree nodes, which have more children  than parents,  and reticulations,  which have more parents than children. 

It is a great challenge to reconstruct phylogenetic networks from sequence or gene tree data \cite{Huson_09}.
 Imposing  topological conditions on networks allows us to define simple classes of RPNs, on which evolution can hopefully be understood well \cite{Gusfield_book,Genetics19,Lesnick_18}. One network class is tree-child networks (TCNs) \cite{Cardona_09b}, in which every non-leaf node has a child that is a tree node or a leaf. Other popular network classes include 
  galled trees \cite{Gusfield_04, Wang_01}, normal networks \cite{Willson_07}, galled networks \cite{Huson_07},  reticulation-visible networks \cite{Huson_book} and tree-based networks \cite{Francis_15,Zhang_16} (see also \cite{Steel_book, Zhang_18}).    Indeed, the tree and cluster containment problems that are NP-complete in general   become solvable in polynomial time when restricted to all but tree-based network class \cite{Bordewich_16, Gambette_15,Gambette_18,Gunawan_16}.

In this paper, we investigate how to count  TCNs and other related classes. Phylogenetic trees are RPNs with no reticulations. It is known that there are  $2^{1-n}(2n-2)!/(n-1)!$ binary phylogenetic trees on $n$ taxa. However, counting becomes much harder for  RPNs \cite{Fuchs_18,Semple_15}.  Recently, progresses  have been made for TCNs,  normal networks and galled trees.  Semple and Steel first studied how to count unrooted galled trees. They gave closed formulas for the number of 
unrooted galled trees in terms of two parameters: the number of galls and the total number of edges over all the gall cycles \cite{Steel_06}.   Bouvel et al. presented generating functions and explicit formulas for count of unrooted and rooted galled trees \cite{Bouvel_18}.
Chang et al. studied how to encode and compress galled trees \cite{Chang_2018}.
  McDiarmid et al. gave approximate formulas for the number
of binary  TCNs and normal networks \cite{Semple_15}. Fuchs et al. presented generating functions for the count of  labeled normal networks and TCNs with few reticulations \cite{Fuchs_18}. 
Cardona et al. developed an exhaustive method for enumerating TCNs \cite{Cardona_19}, which allows one to obtain the exact number of TCNs on six taxa. 
Additionally, the author of this paper and collaborators provided a recurrence approach for counting and enumerating galled networks on $n$ taxa \cite{Zhang_Rathin_G18}.  Unfortunately, no simple formulas are known for exactly counting TCNs 
and normal networks in terms of the number of reticulations and the number of leaves. 

We make three contributions to counting TCNs and other classes of networks. First, we establish a relationship between ordered trees and binary galled trees. Using this relationship, we  derive a different formula for the number of galled trees and the first formula for the number of normal galled trees in Section~\ref{Galled_Sec3}. Secondly, 
the concepts of tree-components and 
network compression were introduced by the author and his collaborators to study  the tree containment problem \cite{Gunawan_16, Zhang_18}. Here,  we apply these concepts to study how to count TCNs.  We present a simple closed formula for counting TCNs in which the child of each reticulation is a leaf.  We then present a recurrence formula to count TCNs through counting and enumerating their component graphs in Section~\ref{Sect4_TC}. Additionally, by using the obtained formulas, we are able to compute the exact number of TCNs on eight taxa.  
Lastly, we present explicit formulas for counting phylogenetic networks with one or two reticulations in different network classes. We conclude the study by posing several research problems about counting phylogenetic networks.

\section{Basic Notation}
\label{sec2}

\subsection{DAGs}

A {\it directed graph} consists of a finite nonempty node set ${\cal V}$ together with a specified set ${\cal E}$ of ordered pairs 
of nodes of ${\cal V}$. Each element of ${\cal E}$ is called an {\it edge}.
 A {\it DAG} is a directed graph with no loops and no directed cycles.
In this study,  two parallel edges with the same orientation may exist between two distinct nodes. 

Let $u$ and $v$ be two nodes of a DAG. If $(u, v)\in {\cal E}$,  we say that $u$ is a {\it parent} of $v$ and $v$ is a {\it child} of $u$. 
  The {\it outdegree} and {\it indegree} of $u$ are defined to be the number of children and parents of $u$, respectively.  The nodes of outdegree 0 are called the {\it leaves}. 

If there is a directed path from $u$ to $v$, $u$ is said to be an {\it ancestor} of $v$ and to be {\it above} $v$; $v$ is said to be a {\it descendant} of $u$ and to be {\it below} $u$. The nodes $u$ and $v$ are {\it incomparable} if neither is an ancestor of the other. The set of leaves below $u$ is called the {\it cluster} of $u$. Two trees are not identical if and only if  they contain different node clusters \cite{Huson_book}. 

A rooted DAG has  a node of indegree 0, called the {\it root}, which is distinguished from the others. In a {\it labeled} DAG with $n$ nodes, the integers from 1 to $n$ are assigned to the nodes, inducing a linear ordering on the nodes. In a {\it leaf-labeled} DAG with $k$ leaves, the integers from 1 to $k$ are assigned to its leaves. In a {\it ordered} DAG, the children are ordered for every non-leaf node.

Two directed graphs are {\it isomorphic} if there is a one-to-one map from the node set of
one graph onto that of the other which preserves the directed edges. The isomorphic map from a rooted (leaf-)labeled DAG to another  preserves not only the edges but also the labeling and the root. The isomorphic map from a ordered DAG to another also preserves the ordering of the children for every node.  Our object is to count non-isomorphic phylogenetic networks of different types, which are rooted DAGs used to model molecular evolution. 

\subsection{RPNs}

A binary RPN on a finite set of taxa $X$ is a rooted DAG 
with no parallel edges such that:
\begin{itemize}
   \item  the {\it root} is the unique node of indegree 0. The root is of outdegree 1. 
   \item there are exactly $|X|$ leaves that are labeled one-to-one with $X$; 
 \item  non-leaf and non-root nodes are either of indegree 2 and outdegree 1,  or 
 of indegree 1 and outdegree 2; and 
 \item each edge is directed away from the root. 
\end{itemize}
For convenience, in the rest of this study, edge orientation is omitted in the graphic representation of RPNs, as illustrated in Figure~\ref{Fig_1}. 

The nodes of indegree 2 and outdegree  1 are called {\it reticulations}.  The nodes of outdegree 2 and indegree 1 are called {\it tree nodes}. 
An edge $(u, v)\in {\cal E}(N)$ is called a {\it tree edge} if $v$ is either a tree node or a leaf; and the edge is called a {\it reticulation} edge if $v$ is a reticulation. 
The following simple results will be used frequently. Here we omit its proof. 

\begin{proposition} 
\label{prop2.1}
Let $N$ be a RPN with $k$ reticulations on $n$ taxa.  $N$ has $k+n-1$ tree nodes and
$k+2n-1$ tree edges and $2k$ reticulation edges. 
\end{proposition}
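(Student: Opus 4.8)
The plan is to count the edges of $N$ in two different ways by summing node degrees, a standard handshaking argument. Write $t$ for the (as yet unknown) number of tree nodes; by the definition of a binary RPN the nodes of $N$ partition into the root, the $t$ tree nodes, the $k$ reticulations, and the $n$ leaves, and I would work throughout with these four classes and their prescribed in- and out-degrees.

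First I would sum the indegrees over all nodes, which equals the number $|{\cal E}(N)|$ of edges. By definition the root has indegree $0$, each tree node and each leaf has indegree $1$, and each reticulation has indegree $2$, so $|{\cal E}(N)| = t + n + 2k$. Next I would sum the outdegrees, which likewise equals $|{\cal E}(N)|$: the root has outdegree $1$, each tree node has outdegree $2$, each reticulation has outdegree $1$, and each leaf has outdegree $0$, giving $|{\cal E}(N)| = 1 + 2t + k$. Equating the two expressions and solving for $t$ yields $t = k + n - 1$, which is the claimed number of tree nodes.

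With $t$ in hand I would read off the remaining two counts directly. A reticulation edge is, by definition, an edge whose head is a reticulation, so the number of reticulation edges equals the total indegree of the reticulations, namely $2k$. The tree edges are exactly the remaining edges, so their number is $|{\cal E}(N)| - 2k = (t + n + 2k) - 2k = t + n = (k + n - 1) + n = k + 2n - 1$, as asserted.

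I do not expect any genuine obstacle here: the whole argument is a double count of edges against the four node types' fixed in- and out-degrees. The only point requiring a little care is the order of operations, namely that the tree-node count $t = k+n-1$ must be established first, since both edge counts are then expressed through it; one should also keep in mind that the root (indegree $0$, outdegree $1$) and the leaves (outdegree $0$) are handled separately from the tree nodes and reticulations, so that no node is miscounted in either degree sum.
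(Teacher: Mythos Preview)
Your argument is correct: the handshaking double count over the four node classes pins down $t=k+n-1$, and the two edge counts then follow immediately from the total indegree of reticulations and of tree nodes plus leaves. The paper itself omits the proof of this proposition, so there is nothing further to compare against; your write-up would serve perfectly well as the missing justification.
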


We will adopt the following notation:

\begin{tabular}{rl}
$[n]$ & the set $\{1, 2, \cdots, n\}$, where $n$ is a positive integer;\\
 $N$ & a RPN on $[n]$;\\
${\cal V}(N)$ &  the set of nodes of a RPN $N$;\\
${\cal R}(N)$ &  the set of reticulations of $N$;\\
${\cal T}(N)$  & the set of tree nodes of $N$;\\
${\cal E}(N)$ & the set of  edges of $N$;\\
\end{tabular}

\subsection{Network classes}
\label{sec2.2}

A binary {\it phylogenetic tree}  is simply a binary RPN with no reticulations.

A RPN is said to be a {\it galled tree} if every reticulation $r$ has an ancestor $a_r$ such that (i) $a_r$ is a tree node, (ii) there are two edge-disjoint directed paths from $a_r$ to $r$ that form a cycle $C_r$ (if edge orientation is ignored) with $a_r$ on top and $r$ at the  bottom,  and (iii) $C_r$ and $C_s$,   as indicated in (ii),   are node-disjoint for different $r$ and $s$. The cycles associated with reticulations under this definition will be called {\it galls} in this study.
Figure~\ref{Fig_1}A shows a galled tree. Note that phylogenetic trees are galled trees. 
Every RPN with only one reticulation is also a galled tree.

A RPN is said to be a  {\it TCN} if every non-leaf node has a child that is either a tree node or a leaf (Figure~\ref{Fig_1}B). Obviously, a RPN is a TCN if and only if for each non-leaf node, there is a path from it to some leaf that consists of only tree edges.

A RPN is said to  be a {\it normal} network if it is a TCN and the two parents of $r$ are incomparable for every reticulation $r$. Figure~\ref{Fig_1}C presents a normal network, whereas the TCN in Figure~\ref{Fig_1}B is not a  normal network. 

A RPN is said to be a {\it galled network} if each reticulation $r$ has an ancestor $a_r$ such that 
Conditions (i) and (ii) under the definition of galled trees are true and (iii) all the edges of the gall $C_r$ are tree edges except for the two edges entering  $r$. The networks in Figure~\ref{Fig_1}C and D are galled networks. Clearly,  galled trees are galled networks.

A RPN is said to be a {\it reticulation-visible} network if for each reticulation $r$, there is a leaf $\ell$ such that every path from the network root to $\ell$ contains $r$.  Figure~\ref{Fig_1}D shows a reticulation-visible network that is neither a galled network nor a TCN.

A RPN is said to be a {\it tree-based} network if it can be obtained from a phylogenetic tree by recursively inserting edges between nodes that subdivide two tree edges in the obtained network in each step. 

A RPN is said to be {\it one-component}  if the child of each reticulation is a leaf.  The RPNs in Figure~\ref{Fig_1}A,  \ref{Fig_1}C and \ref{Fig_1}D are one-component networks, whereas the others are not.

Throughout this paper,  we adopt the following notation:

\begin{tabular}{rl}
${\cal RPN}_{n, k}$ & the set of RPNs with $k$ reticulations on $[n]$;\\
${\cal GT}_{n, k}$  & the set of galled trees with $k$ reticulations on $[n]$;\\
${\cal NN}_{n, k}$  & the set of normal networks with $k$ reticulations on $[n]$;\\
${\cal RV}_{n, k}$   & the set of reticulation-visible networks with $k$ reticulations on $[n]$;\\
${\cal GN}_{n, k}$   & the set of galled networks with $k$ reticulations on $[n]$; \\
${\cal TC}_{n, k}$ & the set of tree-child networks with $k$ reticulations on $[n]$; \\
${\cal TB}_{n, k}$ & the set of tree-based networks with $k$ reticulations on $[n]$;\\
1-${\cal C}$ & the set of one-component networks in $\cal C$ for each network class $\cal C$.
\end{tabular}

\begin{figure}[t!]
            \centering
            \includegraphics[scale = 0.8]{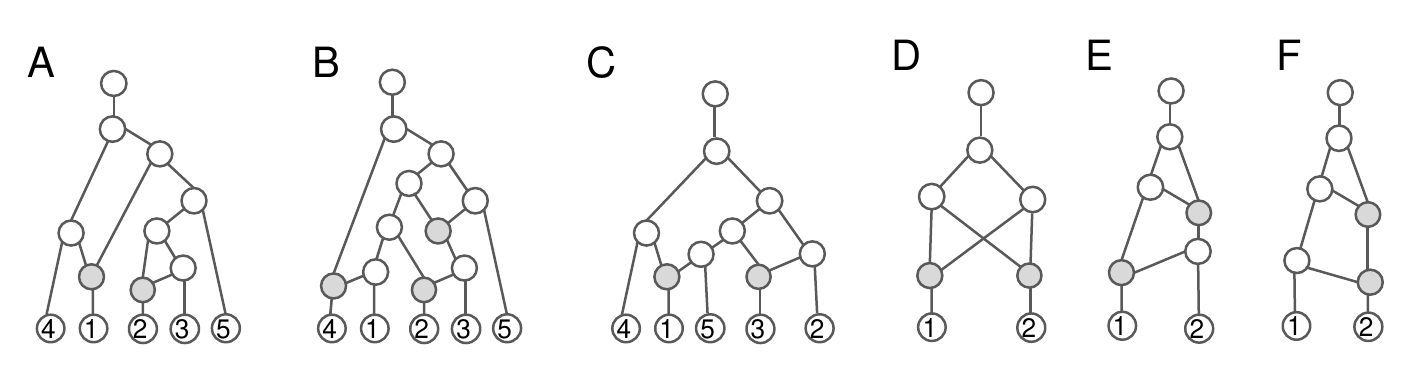}
           \caption{  {\bf Three RPNs}. {\bf A}. A galled tree.  {\bf B}. A tree-child network. {\bf C}. A normal network. {\bf D} A galled network. {\bf E} A reticulation-visible network. {\bf F} A tree-based network.  Here,  the root is the node on the top.  Tree nodes and reticulations are drawn as  unfilled and filled circles, respectively.}
            \label{Fig_1}
\end{figure}

\subsection{Tree-components and network decomposition}

Consider a RPN $N$.  
Let  $N-\mathcal{R}(N)$ denote the subnetwork that is obtained from 
$N$ by the removal of all reticulations together with the incident edges. This subnetwork is actually a forest in which each connected component consists only of tree nodes and is rooted at either the network root or the child of a reticulation. 
Each of these connected components  is called a {\it tree-component} of $N$ \cite{Gunawan_18, Zhang_18}.  

Tree-component is a useful concept for characterizing the topological structures of RPNs. 
A reticulation is {\it inner} if its two parents are  in a common tree-component. It is known that each reticulation is inner for a galled network. It is also known that  every tree-component of a RPN contains  a leaf or 
the two parents of a reticulation if the network is reticulation-visible (see \cite{Zhang_18}).

It is also easy to see that a one-component network has only one non-trivial tree-component that contains all the tree nodes.

\begin{figure}[b!]
            \centering
            \includegraphics[scale = 1.0]{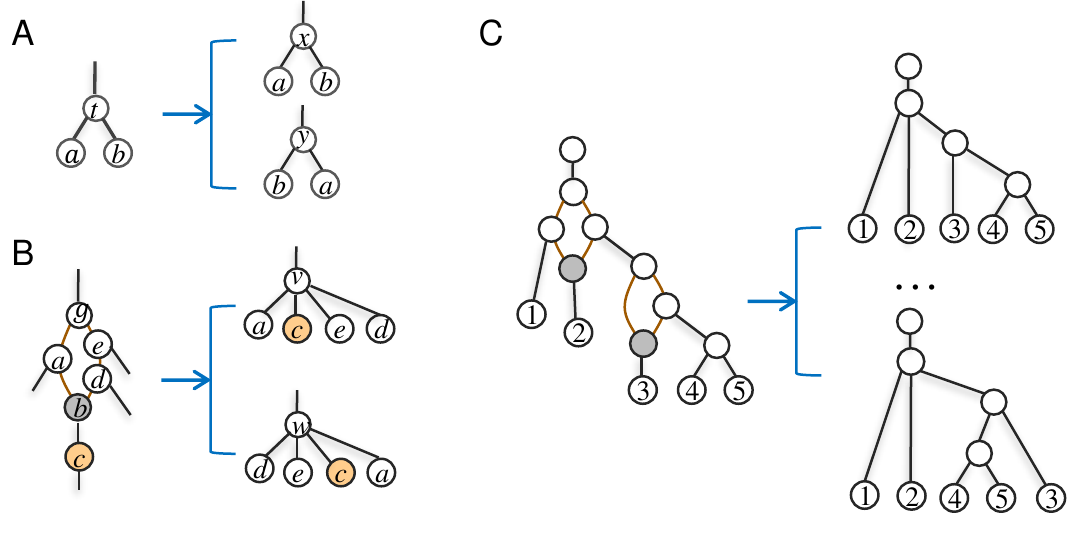}
           \caption{{\bf Illustration of transformation from a galled tree to a rooted, ordered tree with same labled leaves}. {\bf A} and {\bf B}. A mapping from a tree node to  two ordered tree nodes. {\bf C.} Mapping a galled tree to eight rooted, ordered tree with same labeled leaves by using the rules in A and B.   }
            \label{map}
\end{figure}

\section{Counting Galled Trees} 
\label{Galled_Sec3}

Let  ${\cal GT}_n$ denote the set of galled trees on $[n]$ and let ${\cal O}_n$ be the set 
of rooted, ordered trees on $[n]$.  Galled trees are also called level-1 networks. Bouvel et al. gave a closed formulas for the count of rooted and unrooted galled trees \cite{Bouvel_18}.
In this section, we shall count (normal) galled trees on $n$ taxa by establishing a many-to-many relation $m\subseteq {\cal O}_n \times {\cal GT}_n$.  

Let $N\in {\cal GT}_n$. 
For a tree node of $N$ that is not on  any  gall, we let its children be $a$ and $b$. It is mapped to
 $m(t)=\{x, y\}$, where $x$ (resp. $y$) is a tree node with ranked children $a$  and $b$ (resp. $b$ and $a$).
This is illustrated in Figure~\ref{map}A.
 
For a tree node $g$ that is on the top of a gall $C$, we let $b$ be the unique reticulation at the bottom of $C$ and $c$ be the child of $b$.  Assuming that the  tree nodes are $t'_1, t'_2, \cdots, t'_p$ on one side of $C$ (clockwise from $b$ to $g$)  and $t''_1, t''_2, \cdots, t''_q$ (clockwise from $g$ to $b$) on the other side,  $g$ is mapped to 
 $m(g)=\{v, w\}$, where $v$ is a tree node with ranked children 
 $t'_1, t'_2, \cdots, t'_p, c, t''_1, t''_2, \cdots, t''_{q}$, whereas $w$ has the ranked children \\
$t''_1, t''_2, \cdots, t''_q, c, t'_1, t'_2, \cdots, t'_{p}$, as illustrated in Figure~\ref{map}B. 

Using the above rules, we map a galled tree to a set of ordered trees with the same labeled leaves.
Figure~\ref{map}C shows how to transform a galled tree into a set of eight rooted,  ordered trees with leaves labeled with integers in $[n]$.  The following fact is clearly true.

\begin{lemma}
  Let $N\in {\cal GT}_n$ have $r$ reticulations and $k$ tree nodes that are not on any galls associated with reticulations. Then, replacing each of the $r+k$ tree nodes $t$ that  are either on the top of a gall or not in any gall
with either image of $m(t)$ produces $2^{r+k}$ trees of ${\cal O}_n$. 
\end{lemma}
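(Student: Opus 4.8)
The plan is to regard the $r+k$ nodes at which $m$ offers two images as independent binary switches, and to show that each of the $2^{r+k}$ settings yields a distinct rooted ordered tree on $[n]$. I first pin down that the switches really number $r+k$. Since $N$ is a galled tree, each of its $r$ reticulations is the bottom of a unique gall, and by Condition~(iii) in the definition these $r$ galls are pairwise node-disjoint; hence their $r$ top nodes are distinct tree nodes, and none of them belongs to the $k$ tree nodes lying on no gall. The nodes to be replaced are therefore the disjoint union of these two families, of total size $r+k$, and at each such node $t$ the set $m(t)$ has exactly two elements.

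Next I would check that choosing one image at every switch converts $N$ into an element of ${\cal O}_n$. At a gall-free tree node the rule merely fixes an order on its two children and alters nothing else. At a gall-top $g$ the rule replaces the whole gall $C$ by the single ordered configuration $m(g)$, whose children list interleaves the subtrees hanging off the two sides of $C$ with the subtree rooted at the reticulation child $c$; this discards the reticulation and opens the one cycle carried by $C$. Because the galls are node-disjoint, these surgeries act on disjoint parts of $N$ and cannot interfere, so performing all of them simultaneously removes every reticulation and breaks every cycle. What remains is connected, acyclic, rooted at the former root, and equipped with a child-order at every internal node, i.e.\ a rooted ordered tree; moreover no leaf is created, deleted, or relabeled, so its labeled leaf set is exactly $[n]$.

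It then remains to see that distinct settings give distinct trees. The crucial observation is that none of the choices changes the underlying unordered leaf-labeled shape: both images at a gall-free node, and both images $v,w$ at a gall-top, carry the same subtrees and differ only in their left-to-right order. Thus all $2^{r+k}$ outputs share one shape, and a setting is recorded purely in the child-orders at the switches. At a gall-free node the two orders transpose its two subtrees, and at a gall-top they interchange the two blocks $t'_1,\dots,t'_p$ and $t''_1,\dots,t''_q$ flanking $c$ (at least one of which is nonempty, since a gall has no parallel edges); as all these subtrees carry disjoint, distinctly labeled leaves, the two local orders are always genuinely different and can be read back off the output. Hence the setting is recoverable from the tree, the setting-to-tree map is injective, and its image has exactly $2^{r+k}$ elements.

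The one place that warrants care is the simultaneous opening of all galls in the second step—one must be sure that no cycle survives and that the pieces reassemble into a single tree—but node-disjointness of the galls reduces this to the single local check above, which is exactly why the statement can be called clear.
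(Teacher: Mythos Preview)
Your argument is correct. The paper does not actually prove this lemma: it simply declares ``The following fact is clearly true'' and states it. Your write-up supplies exactly the details that justify the word ``clearly''---that the $r$ gall-tops and the $k$ gall-free tree nodes are disjoint (by node-disjointness of galls), that the local replacements commute and yield a rooted ordered tree on $[n]$, and that the $2^{r+k}$ outputs are pairwise distinct because the leaf labels make every local order recoverable. Your observation that at least one of $p,q$ is positive (else the gall would consist of two parallel edges, forbidden in an RPN) is the one nontrivial point needed for injectivity at gall-tops, and you handle it correctly.
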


Conversely, we can derive  a set of galled trees on $[n]$ from a rooted, ordered tree on $[n]$ by transforming 
a non-leaf node with $d$ ordered children into one of $d$ galls for $d>2$,  and a non-leaf node 
into a galled or an unordered tree node if $d=2$, as shown in Figure~\ref{reverse_map}.  Thus, we have the following 
lemma.

\begin{lemma}
Let $T\in {\cal O}_n$  have $s$ nodes,  each having two ordered children,  and  $t$  nodes, each having $m_1, m_2, \cdots, m_t$ ordered children, where $m_i> 2$ for each $i$.   $T$ can then be transformed into 
$3^sm_1m_2\cdots m_t$ different galled trees on $[n]$.
\end{lemma}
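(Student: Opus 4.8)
The plan is to invert the node-level rules of Figure~\ref{map} and to count, independently at each internal node of $T$, how many galled-tree gadgets can be substituted there; the product of these local counts over all internal nodes then yields the total. I would fix an internal node $u$ of $T$ with ordered children $(a_1,\ldots,a_d)$, writing $a_i$ for the ordered subtree rooted at the $i$-th child. Reading Figure~\ref{map} backwards, the admissible gadgets at $u$ are exactly: (a) an unordered tree node with children $a_1,a_2$, which is possible only when $d=2$ since a tree node is binary; and (b) a gall whose top tree node lies above a reticulation $r$, where one subtree $a_j$ is taken to be the child $c$ of $r$ and the remaining $d-1$ subtrees are distributed along the two sides of the gall cycle, the prefix $a_1,\ldots,a_{j-1}$ on one side and the suffix $a_{j+1},\ldots,a_d$ on the other, as prescribed by the images $v$ and $w$ of Figure~\ref{map}B.

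Next I would count the galls available at $u$. Such a gall is determined by the single choice of which subtree $a_j$ becomes the reticulation child $c$: once $j$ is fixed, the prefix and suffix are forced onto the two sides, and since the two sides of a gall are interchangeable---this is precisely the collapse of the two images $v$ and $w$ into one unordered gall---no further freedom remains. Distinct values of $j$ give distinct galls, as the reticulation child $a_j$ carries a different leaf-set for each $j$. Hence there are exactly $d$ galls, including the extreme cases $j=1$ and $j=d$, in which one side of the cycle is empty and the gall top is a direct parent of $r$; these are still valid galls, the ban on parallel edges ruling out only the degenerate case with both sides empty, which cannot arise since $d\ge 2$. Together with (a), a node of degree two admits $1+2=3$ realizations, while a node of degree $m_i>2$ admits $m_i$ realizations.

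Because the subtrees below $u$ are pairwise disjoint, the galls created at distinct internal nodes are node-disjoint, so condition~(iii) in the definition of galled trees holds and every choice-combination produces a genuine galled tree on $[n]$ (the outdegree-one root being attached above the gadget of the root of $T$, a fixed step that contributes no choice). The local choices being independent, the multiplication principle delivers $3^s\,m_1m_2\cdots m_t$ galled trees. For distinctness I would appeal to the leaf-labeling: any isomorphism between two outputs must preserve the labels in $[n]$, hence the common skeleton inherited from $T$, and so carry gadgets to gadgets node by node; two different choice-combinations thus differ at some node either in the presence of a reticulation or in the leaf-set of the reticulation child $c$, producing different reticulation clusters and therefore non-isomorphic galled trees.

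The step I expect to be the main obstacle is the exact bookkeeping of the local gall count: one must recognize that the two images $v$ and $w$ of Figure~\ref{map}B represent a single unordered gall, so that each position $j$ contributes exactly one gall---giving $d$ rather than $2d$ or $d-1$---and one must check that the extreme positions $j=1,d$ still yield valid galls. The distinctness claim, though intuitively forced by the labeling, likewise needs to be verified carefully so that no two choice-combinations accidentally coincide.
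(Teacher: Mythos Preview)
Your proposal is correct and follows the same approach as the paper. The paper itself offers no formal proof beyond the sentence preceding the lemma and Figure~\ref{reverse_map}, merely asserting that a node with $d>2$ ordered children becomes one of $d$ galls and a node with two ordered children becomes either an unordered tree node or one of two galls; your argument is a careful expansion of exactly this idea, including the verification that the extreme positions $j=1,d$ yield valid galls and the distinctness check, neither of which the paper spells out.
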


\begin{figure}[b!]
            \centering
            \includegraphics[scale =1]{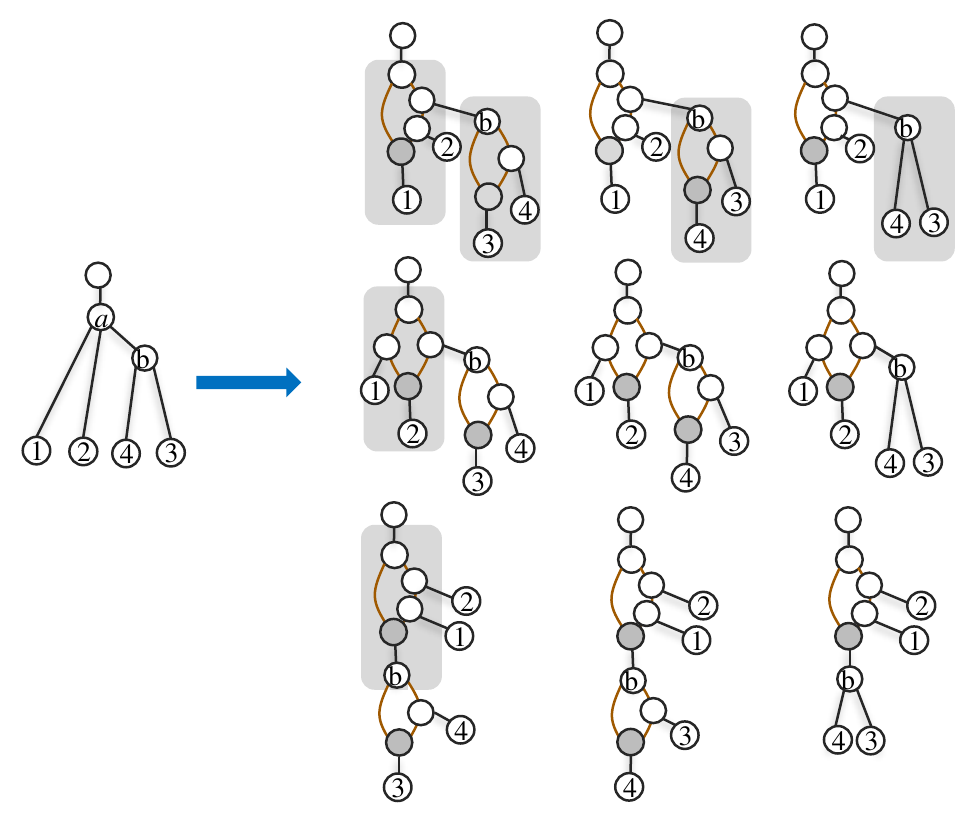}
           \caption{{\bf Illustration of transformation from a rooted, ordered tree to a set of galled trees.} The ordered tree has  the internal nodes $a$ and $b$. The node $a$ has three children ordered from left to right. It can be mapped to one of the three possible galls (shaded in the first column). The node  $b$ has two children ordered from left to right. The node can remain as a binary node or be mapped to one of the two possible galls (shaded in the first row). Therefore, the ordered tree corresponds to  nine galled trees.}
            \label{reverse_map}
\end{figure}

\begin{theorem}
Let $$C=\{ (k_2, k_3, \cdots, k_n) \;|\;  n=1+k_2 +2k_3 + \cdots + (n-1)k_{n}; k_i\geq 0, i=2, \cdots, n \}.$$ 
 We then have:
   \begin{eqnarray}
\label{galled_no2}
     \sum_{(k_2, k_3, \cdots, k_{n})\in C}
     \frac{(n+k_2+\cdots + k_{n}-1)!3^{k_2+k_3}4^{k_4}\cdots n^{k_n}}
      {k_2!k_3!\cdots k_n! 2^{k_2+k_3+\cdots + k_n}}.
\end{eqnarray} 
galled trees on $[n]$.
\end{theorem}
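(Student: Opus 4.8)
The plan is to read formula (\ref{galled_no2}) as a weighted enumeration of $\mathcal{O}_n$ transported to $\mathcal{GT}_n$ along the relation $m$, organised by the degree profile of the ordered tree. For $T\in\mathcal{O}_n$ write $k_i(T)$ for the number of internal nodes with exactly $i$ children and $K(T)=\sum_{i\geq2}k_i(T)$ for the total number of internal nodes; every internal node has at least two children, so the admissible profiles are exactly the tuples in $C$ (the edge count $\sum_{i\geq2} i\,k_i=n+K-1$ rearranges into $n=1+k_2+2k_3+\cdots+(n-1)k_n$). The two lemmas above describe $m$ from both sides: each galled tree $N$ with $r$ reticulations and $k$ non-gall tree nodes is related to $2^{r+k}$ ordered trees, while each $T$ is related to $3^{s}m_1\cdots m_t=3^{k_2(T)}\prod_{i\geq3}i^{\,k_i(T)}$ galled trees, where $s=k_2(T)$ and $m_1,\dots,m_t$ are the degrees exceeding $2$. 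First I would check that these two descriptions count the edges of a single bipartite graph on $\mathcal{O}_n\times\mathcal{GT}_n$: because the leaves carry distinct labels, distinct choices in either construction yield distinct trees, so the degree of $N$ is exactly $2^{r+k}$ and the degree of $T$ is exactly $3^{s}\prod_i m_i$.

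The crux is a single local observation: under $m$, each gall of $N$ and each non-gall tree node of $N$ becomes exactly one internal node of the associated $T$, and these account for all internal nodes of $T$. Hence for every related pair $(T,N)$ one has $K(T)=r(N)+k(N)$, so the forward multiplicity $2^{r(N)+k(N)}$ of $N$ coincides with $2^{K(T)}$ for every $T$ related to it. Weighting each edge by $2^{-K(T)}$ and counting the graph from both ends then gives
\[
 \sum_{T\in\mathcal{O}_n}\frac{3^{k_2(T)}\prod_{i\geq3}i^{\,k_i(T)}}{2^{K(T)}}
 =\sum_{(T,N)\in m}\frac{1}{2^{K(T)}}
 =\sum_{N\in\mathcal{GT}_n}\frac{2^{r(N)+k(N)}}{2^{r(N)+k(N)}}
 =|\mathcal{GT}_n|,
\]
since each $N$ is incident to exactly $2^{r(N)+k(N)}$ edges and each such edge carries weight $2^{-(r(N)+k(N))}$. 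This is the source of the factor $2^{-(k_2+\cdots+k_n)}$ in (\ref{galled_no2}): it cancels the over-counting created by the forward map, turning a weighted edge count into a plain count of galled trees.

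It then remains to evaluate the left-hand sum by grouping ordered trees according to their profile. I would invoke the standard enumeration of leaf-labelled plane trees: the number of $T\in\mathcal{O}_n$ with profile $(k_2,\dots,k_n)\in C$ is
\[
 \frac{(n+K-1)!}{k_2!\,k_3!\cdots k_n!},\qquad K=k_2+\cdots+k_n,
\]
obtained from the cycle-lemma shape count $\frac{(n+K)!}{(n+K)\,n!\,k_2!\cdots k_n!}=\frac{(n+K-1)!}{n!\,k_2!\cdots k_n!}$ of plane trees (each rigid, i.e. with trivial automorphism group) multiplied by the $n!$ ways of labelling the leaves. Since $s(T)=k_2$, $\prod_i m_i=\prod_{i\geq3}i^{\,k_i}=3^{k_3}4^{k_4}\cdots n^{k_n}$ and $K(T)=K$ are constant on each profile class, substituting this count into the displayed identity yields exactly the sum in (\ref{galled_no2}), with general term $\dfrac{(n+k_2+\cdots+k_n-1)!\,3^{k_2+k_3}4^{k_4}\cdots n^{k_n}}{k_2!\,k_3!\cdots k_n!\,2^{k_2+\cdots+k_n}}$.

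I expect the main obstacle to be the bookkeeping that makes the double count rigorous rather than the plane-tree enumeration, which is classical. Concretely, I would need to verify that the forward construction of the first lemma and the reverse construction of the second lemma really do describe one and the same bipartite graph with the stated one-sided degrees: this rests on the distinctness of the images on both sides and, above all, on the identity $K(T)=r(N)+k(N)$ valid on every edge. A secondary point to settle is that every profile in $C$ is realised and that no internal node of outdegree $1$ ever occurs, so that the index set of the final sum is precisely $C$.
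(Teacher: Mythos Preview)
Your proposal is correct and follows essentially the same route as the paper: both arguments set up the relation $m\subseteq\mathcal{O}_n\times\mathcal{GT}_n$, use Lemma~3 to count its edges from the $\mathcal{O}_n$ side and Lemma~2 from the $\mathcal{GT}_n$ side, and divide out the over-counting factor $2^{r+k}=2^{K(T)}$ after enumerating ordered trees by degree profile. The only cosmetic differences are that you invoke the cycle lemma where the paper cites Erd\H{o}s--Sz\'ekely (both yield $\frac{(n+K-1)!}{k_2!\cdots k_n!}$), and that you make the identity $K(T)=r(N)+k(N)$ explicit, which the paper uses implicitly when it asserts that each galled tree arises from exactly $2^{k_2+\cdots+k_n}$ ordered trees.
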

\begin{proof}
For any $(k_2, k_3, \cdots, k_n)\in C$, by a theorem of Erd\H{o}s and Sz\'{e}kely \cite[Theorem 1]{Erdos_1},
there are 
$$A=\frac{(n+k_2+\cdots +k_n -1)!}{k_2!(2!)^{k_2}k_3!(3!)^{k_3}\cdots k_n!(n!)^{k_n}}$$
rooted trees on $[n]$,  with $k_i$ internal nodes having $i$ children for $i=2, \cdots, n$.  For each internal node with $t$ children, these children can be ordered in $t!$ different ways. Thus, there are 
$$B=(2!)^{k_2}(3!)^{k_3}\cdots (n!)^{k_n}A=\frac{(n+k_2+\cdots +k_n -1)!}{k_2!k_3!\cdots k_n!}$$ rooted, ordered trees with $k_2+k_3+\cdots +k_n$ internal nodes of the prescribed degrees. By Lemma 2, we can obtained  $C=3^{k_2}3^{k_3}4^{k_4}\cdots n^{k_n}B$ galled trees on $[n]$ from the $B$ rooted, ordered trees. 

By Lemma 1, each galled trees can be obtained from
 $D=2^{k_2+k_3+\cdots+k_n}$
 rooted, ordered trees. Furthermore,  we derive Eqn.~(\ref{galled_no2}) from  dividing $C$ by $D$.\qed
\end{proof}

Since each internal node with $t\geq 3$ ordered children can be transformed into $t-2$ galls in which each side path contains at least one node, we obtain the following formula to count galled trees on $[n]$ that are also normal networks.

\begin{theorem}
Let $$C=\{ (k_2, k_3, \cdots, k_n) \;|\;  n=1+k_2 +2k_3 + \cdots + (n-1)k_{n}; k_i\geq 0, i=2, \cdots, n \}.$$ 
 There are: 
   \begin{eqnarray}
\label{galled_no}
     \sum_{(k_2, k_3, \cdots, k_{n})\in C}
     \frac{(n+k_2+\cdots + k_{n}-1)!1^{k_3}2^{k_4}\cdots (n-2)^{k_n}}
      {k_2!k_3!\cdots k_n! 2^{k_2+k_3+\cdots + k_n}}
\end{eqnarray} 
 normal galled trees on $[n]$.
\end{theorem}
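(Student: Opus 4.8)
The plan is to follow the argument of the preceding theorem (Eqn.~(\ref{galled_no2})) almost verbatim, changing only the multiplicity with which the reverse transformation of Lemma~2 is allowed to act; the enumeration of ordered trees and the final division are insensitive to normality and can be reused unchanged. So I would fix a composition $(k_2,\dots,k_n)\in C$ and recall from that proof that there are exactly $B=\frac{(n+k_2+\cdots+k_n-1)!}{k_2!\cdots k_n!}$ rooted ordered trees on $[n]$ with $k_i$ internal nodes of $i$ ordered children. The only new ingredient is to determine, among the galled trees produced from such an ordered tree, which ones are normal.

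The crucial step is a local, gall-by-gall characterization of normality. Consider the gall obtained from an ordered node with $d$ children by declaring its $j$-th child to be the child $c$ of the reticulation $b$; then one side path of the gall carries the first $j-1$ children and the other carries the last $d-j$ children, so the two parents of $b$ are the bottom nodes of these two side paths. If a side path is empty (i.e.\ $j=1$ or $j=d$), the top node $g$ of the gall is itself a parent of $b$ while the other parent lies strictly below $g$; these two parents are comparable, so the network is not normal. If both side paths are nonempty, the two parents sit on branches that diverge at $g$ and, because galls are node-disjoint, neither can reach the other, hence they are incomparable; one also checks directly that every node then has a tree-node or leaf child, so the network is tree-child and therefore normal. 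Consequently a degree-$t$ ordered node yields exactly $t-2$ normal galls when $t\ge 3$ (the admissible positions $2\le j\le t-1$) and, when $t=2$, admits no normal gall and must remain an unordered binary tree node, contributing the factor $1$. This is precisely the count asserted in the sentence preceding the theorem.

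Replacing the reverse-transformation multiplicity accordingly, each of the $B$ ordered trees produces exactly $P=1^{k_3}2^{k_4}\cdots(n-2)^{k_n}$ normal galled trees, so the number of (ordered tree, reconstruction) pairs yielding a normal galled tree is $PB$. It then remains to see that the division factor is unchanged. For a normal galled tree every gall has both side paths nonempty, so each gall corresponds to an ordered node of degree $\ge 3$ and each remaining binary tree node to an ordered node of degree $2$; hence Lemma~1 associates to it exactly $2^{r+k}=2^{k_2+\cdots+k_n}=D$ ordered trees, all in the class $(k_2,\dots,k_n)$, and each of these recovers the given normal galled tree under the normality-restricted reverse map. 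Thus every normal galled tree is counted exactly $D$ times, and $PB/D$ is the summand of Eqn.~(\ref{galled_no}); summing over $C$ gives the formula.

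The step I expect to be the main obstacle is this normality characterization together with the multiplicity bookkeeping: one must verify both that the empty-side-path galls are exactly the non-normal ones (the incomparability argument via node-disjointness of galls, plus the tree-child check) and that restricting the forward map of Lemma~1 and the reverse map of Lemma~2 to the normal setting keeps them mutually inverse, so that the overcount remains precisely $D=2^{\sum_i k_i}$ rather than some gall-dependent power of two. Once this consistency is in place, the arithmetic is identical to that of the preceding theorem.
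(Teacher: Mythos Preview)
Your proposal is correct and follows exactly the paper's approach: the paper gives no separate proof of this theorem beyond the single sentence preceding it, namely that an ordered node with $t\ge 3$ children can be transformed into $t-2$ galls with both side paths nonempty, whence the formula follows by rerunning the argument of the previous theorem with the new multiplicities. Your write-up is in fact more complete than the paper's, since you spell out the normality characterization (empty side path $\Leftrightarrow$ comparable parents), verify the tree-child property, and check that the overcount factor $D=2^{\sum_i k_i}$ survives the restriction to the normal case; the paper leaves all of this implicit.
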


\begin{figure}[t!]
            \centering
            \includegraphics[scale = 0.8]{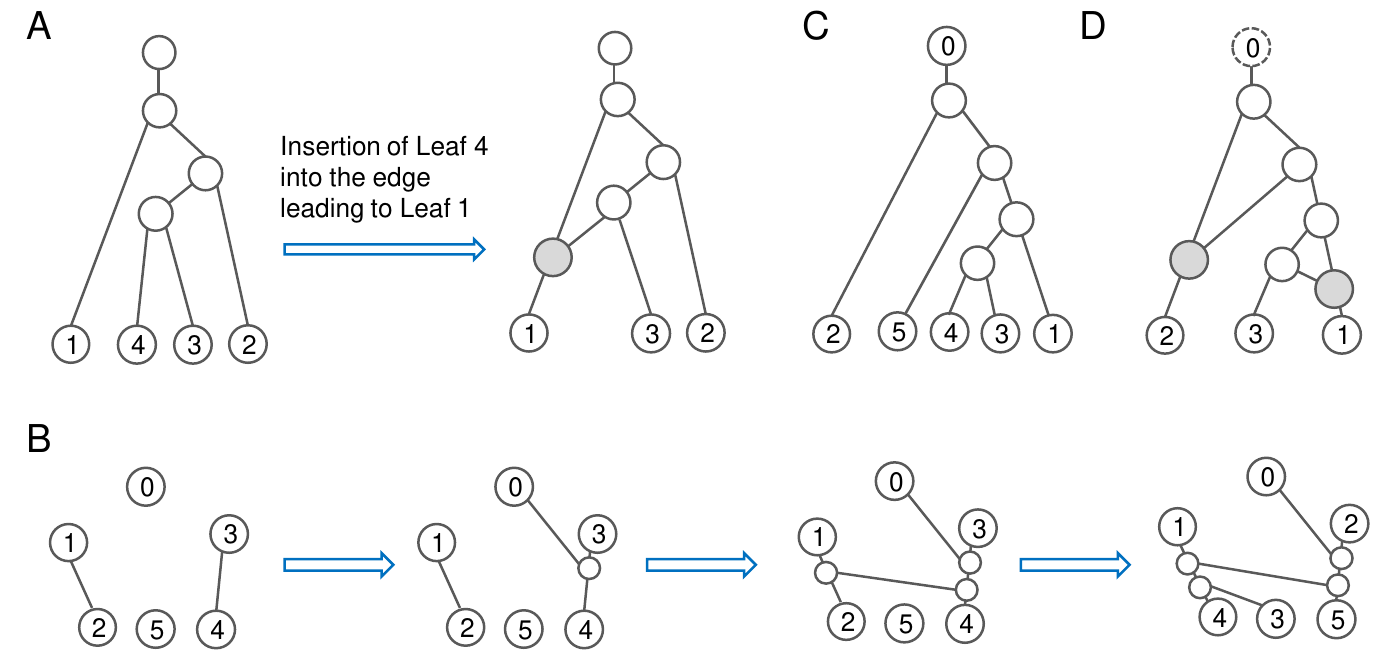}
           \caption{{\bf Illustration of the proof of Proposition~\ref{count_1cGalledTree}.} {\bf A}. Insert a leaf into the edge leading to another leaf in a tree generates a galled tree with one reticulation if the two leaves are not in a cherry. {\bf B}. Edge attachment merges two unrooted trees by inserting an edge between a node that subdivides 
an edge of one tree and a node that subdivides an edge of the other tree, where if a tree is a single node tree, the added edge is incident to the unique node of the tree. {\bf C}. The phylogenetic tree on
$[5]$ obtained from the tree built in C by rooting at 0, in which Leaves 1 and 4 are not in a cherry and Leaves 2 and 5 are not in a cherry.  {\bf D}. The galled tree that is obtained from the tree in C by  inserting Leaves 4 and 5  into the edges leading to Leaves 1 and 2, respectively. 
 \label{Ills_1C_galledTree}
}
\end{figure}

To conclude this subsection, we present a formula to count  one-component galled trees, which will be used when the galled trees with 2 reticulations are counted.
Recall that  the child of each reticulation is a leaf for one-component galled trees.

\begin{proposition} 
\label{count_1cGalledTree}
 Let ${\cal OGT}_{n, k}$ be the set of one component galled trees with $k$ reticulations on $[n]$ in which the children of the $k$ reticulation are Leaves $1$, $2$ and $k$, respectively, where $1\leq k < n$ and $n\geq 3$. Then, 
\begin{eqnarray}
  && |{\cal OGT}_{n, 1}|=\frac{(2n-2)!}{2^{n-1}(n-2)!}, \label{1ret_galledtree}\\
  && |{\cal OGT}_{n, 2}|=\frac{(2n-2)!}{3\cdot 2^{n-1}(n-3)!},\label{2ret_galledtree}\\
  &&|{\cal OGT}_{n, k}|=\frac{1}{2^{n+k-1}}\sum^{k}_{j=0}(-1)^{k-j}{k\choose j}\frac{j!(2n+2j-2)!}{(2j)!(n+j-1)!}.  
   \label{general_case}
\end{eqnarray}
where $0!=1$.
\end{proposition}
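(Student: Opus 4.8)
The plan is to count one-component galled trees by reducing them to ordinary binary phylogenetic trees, for which the number on $m$ taxa is $\beta_m:=\frac{(2m-2)!}{2^{m-1}(m-1)!}$ (the count quoted in the Introduction), and then to pay for the reticulations by an inclusion--exclusion. The basic device is to \emph{split} every reticulation. Given $N\in{\cal OGT}_{n,k}$, let $r_i$ be the reticulation whose child is leaf $i$ and let $u_i,v_i$ be its two parents; delete $r_i$ and leaf $i$ and hang two fresh pendant leaves $a_i,b_i$ below $u_i$ and $v_i$. Because $u_i$ and $v_i$ each keep their off-gall child, the result $\widehat N$ is a rooted binary phylogenetic tree on the $n+k$ leaves $\{a_1,b_1,\dots,a_k,b_k\}\cup\{k+1,\dots,n\}$. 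Conversely, contracting $k$ prescribed leaf-pairs of such a tree returns a member of ${\cal OGT}_{n,k}$ exactly when the $k$ paths joining the pairs are pairwise vertex-disjoint and no pair is a cherry; otherwise two galls would share a vertex or a parallel edge would appear. Ordering each pair (the two parents of a reticulation are interchangeable) makes this correspondence $2^k$-to-one, so $2^k|{\cal OGT}_{n,k}|$ equals the number of binary trees on $n+k$ leaves whose $k$ designated pairs have pairwise vertex-disjoint, non-degenerate joining paths.

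I would dispose of $k=1$ first, where the only obstruction is a cherry. Here it is cleanest to argue directly: deleting one of the two reticulation edges of $N\in{\cal OGT}_{n,1}$ and suppressing the two resulting degree-two vertices produces a binary tree $T$ on $[n]$ together with the edge $e$ that carried the suppressed second parent. The edge $e$ may be any edge of $T$ except the pendant edge of leaf $1$ (that is, $2n-2$ choices), and re-inserting a reticulation above leaf $1$ whose second parent subdivides $e$ always recreates a valid single-gall network; since the distinct leaf labels forbid any symmetry exchanging the two sides of the gall, each network is obtained from exactly two such (tree, edge) pairs. This gives $2|{\cal OGT}_{n,1}|=(2n-2)\beta_n$, i.e. $|{\cal OGT}_{n,1}|=(n-1)\beta_n=\frac{(2n-2)!}{2^{n-1}(n-2)!}$, which is (\ref{1ret_galledtree}).

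For general $k$ the task is to evaluate the number of valid trees produced by the splitting construction, and I would do this by inclusion--exclusion over the reticulations whose galls fail to be separated. The outcome I am aiming at is \[2^k|{\cal OGT}_{n,k}|=\sum_{j=0}^{k}(-1)^{k-j}\binom{k}{j}\frac{\beta_{n+j}}{(2j-1)!!},\] which, after inserting $\beta_{n+j}=\frac{(2n+2j-2)!}{2^{n+j-1}(n+j-1)!}$ and $(2j-1)!!=\frac{(2j)!}{2^{j}j!}$, is precisely (\ref{general_case}). I expect the real difficulty to be here: a cherry is easy to account for (it merges two split-leaves into one and drops the count to a tree on one fewer leaf), but the event that two galls \emph{overlap} without forming a cherry is far more delicate, and it is exactly this overlap bookkeeping that is responsible for the double-factorial weights $1/(2j-1)!!$ rather than plain tree counts. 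To keep myself honest I would verify that the alternating sum telescopes to the closed form \[|{\cal OGT}_{n,k}|=\frac{(2n-2)!}{2^{n-1}(n-1-k)!\,(2k-1)!!},\] equivalently that it obeys the recurrence $(2k-1)|{\cal OGT}_{n,k}|=(n-k)|{\cal OGT}_{n,k-1}|$ (using $\beta_{m+1}=(2m-1)\beta_m$); both render the general formula transparent and immediately recover the $k=1$ and $k=2$ instances.

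Finally, (\ref{2ret_galledtree}) is the specialisation $k=2$: the sum reads $\frac14\bigl(\beta_n-2\beta_{n+1}+\tfrac13\beta_{n+2}\bigr)$, and $\beta_{n+1}=(2n-1)\beta_n$, $\beta_{n+2}=(2n+1)(2n-1)\beta_n$ collapse it to $\frac{(n-1)(n-2)}{3}\beta_n=\frac{(2n-2)!}{3\cdot 2^{n-1}(n-3)!}$. Alternatively, matching the construction of Figure~\ref{Ills_1C_galledTree}, I would obtain (\ref{2ret_galledtree}) directly by the $k=1$ edge-deletion argument applied twice and then subtracting the configurations in which the two galls share a vertex.
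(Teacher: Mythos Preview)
Your $k=1$ argument is correct and is in fact cleaner than the paper's: you delete one reticulation edge and record which edge of the resulting tree on $[n]$ carried the second parent, getting a $2$-to-$1$ map onto pairs $(T,e)$ with $e$ not the pendant edge of leaf~$1$. The paper instead works with phylogenetic trees on $[n+1]$ and excludes those in which leaves $1$ and $n+1$ form a cherry. Both give $(n-1)\beta_n$.

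For general $k$, however, you have a genuine gap, and you name it yourself: you set up the splitting map to trees on $n+k$ leaves, correctly identify the image as trees whose $k$ designated leaf-pairs have pairwise vertex-disjoint, non-degenerate connecting paths, and then state the inclusion--exclusion formula you are ``aiming at'' --- but you do not carry out the bookkeeping that produces the weights $1/(2j-1)!!$. The event ``two galls overlap'' is not a simple intersection of independent bad events, so there is no evident inclusion--exclusion over subsets of reticulations that yields those terms.

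The paper sidesteps precisely this difficulty. Rather than counting \emph{all} trees on $n+k+1$ leaves and then sieving for disjoint paths, it restricts from the start to the set $S(n,k)$ of unrooted trees obtainable by edge-attachment from the forest $\{K(i,n+i):1\le i\le k\}\cup\{P(0)\}\cup\{P(j):k<j\le n\}$. Because the pairs $(i,n+i)$ begin on separate edge-components and edge-attachment only subdivides existing edges, the $k$ connecting paths in any tree of $S(n,k)$ are automatically vertex-disjoint. A classical formula (Theorem~2.8.3 of Semple--Steel) then gives $|S(n,k)|=\dfrac{(2n+2k-2)!\,k!}{2^{n-1}(n+k-1)!\,(2k)!}$, which is exactly your $\beta_{n+k}/(2k-1)!!$, and now the only remaining obstruction is that some pair $(i,n+i)$ may form a cherry. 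Inclusion--exclusion over cherries alone, together with the observation that forcing a set $I$ of pairs to be cherries leaves a count of $|S(n,k-|I|)|$, gives the alternating sum directly.

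Incidentally, your observed closed form $|{\cal OGT}_{n,k}|=\dfrac{(2n-2)!}{2^{n-1}(n-1-k)!\,(2k-1)!!}$ and the recurrence $(2k-1)|{\cal OGT}_{n,k}|=(n-k)|{\cal OGT}_{n,k-1}|$ are correct and are not stated in the paper; a direct combinatorial proof of that recurrence would be an alternative route, but you have not supplied one.
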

\begin{proof}
 Eqn.~(\ref{1ret_galledtree}) and (\ref{2ret_galledtree}) are special cases of Eqn.~(\ref{general_case}) where $k$ is 1 and 2, respectively. Nevertheless,  we first prove Eqn.~(\ref{1ret_galledtree}) as a warm-up exercise.  

In a RPN, two leaves are said to be in a cherry if they are adjacent to the same tree node.
Consider all $\frac{(2n)!}{2^nn!}$ possible phylogenetic trees on $[n+1]$. From each of these trees such  that 
Leaves $1$ and $n+1$ are not in a cherry, we can generate a galled tree of ${\cal OGT}_{n, 1}$ by 
inserting Leaf $n+1$ into the edge leading to Leaf 1 so that Leaf $n+1$ becomes a reticulation parent of Leaf 1 in the resulting galled tree (Figure~\ref{Ills_1C_galledTree}A). 
Since there are $\frac{(2n-2)!}{2^{n-1}(n-1)!}$ trees in which Leaves 1 and $n+1$ are in  a cherry, there are 
 $\frac{(2n)!}{2^nn!}-\frac{(2n-2)!}{2^{n-1}(n-1)!}$ trees in which the two leaves do not appear in a cherry.

For two trees $T_1$ and $T_2$, our galled tree generation procedure produces the same galled tree from them  if $T_1$ is identical to $T_2$ after Leaf 1 and Leaf $n+1$ are interchanged.   Therefore, 
 $|{\cal OGT}_{n, 1}|=\frac{1}{2}\left(\frac{(2n)!}{2^nn!}-\frac{(2n-2)!}{2^{n-1}(n-1)!}\right)=\frac{(2n-2)!}{2^{n-1}(n-2)!},$
obtaining Eqn.~(\ref{1ret_galledtree}).

In order to prove Eqn.~(\ref{general_case}), we first introduce the edge attachment operation, illustrated in Figure~\ref{Ills_1C_galledTree}B. 
 Let $T$ and $T'$ be unrooted two phylogenetic trees on $L$ and $L'$, respectively,  such that 
$L\cap L'=\emptyset$. An {\it edge attachment} operation builds a phylogenetic tree on $L\cup L'$ by 
adding an edge between a node that is either a node subdividing an edge of $T$ or the unique node of $T$ if $|{\cal V}(T)|=1$ 
and another node that is either a node subdividing an edge of $T'$ or is the node of $T'$ if $|{\cal V}(T')|=1$. Clearly, 
we can obtain a set of phylogenetic trees from a forest consisting of $k$ unrooted phylogenetic trees over distinct taxa by applying $k$ edge attachments between the original and resulting trees in such a way that each edge attachment reduces the number of  
the trees in the forest by 1. 

We use $K(i, i')$ to denote the one-edge unrooted tree with labeled leaves $i$ and $i'$ and 
$P(i)$ to denote the tree with a single node labeled with $i$. Let $S(m, r)$ be the set of unrooted phylogenetic trees  that are   
obtained by applying $m+1$ edge attachments onto a forest $F(m, r)$ consisting of $m+1$ unrooted trees $\{K(i, m+i), P(0), P(j)\;|\; 1\leq i\leq r, r+1\leq j\leq m\}$.
By Theorem 2.8.3 in \cite[page 39]{semple_book}, we obtain: 
\begin{eqnarray*}
 |S(m, r)|&=&\frac{b(m+r+1)}{b((m+r+1)-(m+1)+2)}\prod^{r}_{i=1}\left|{\cal E}(K(i, m+i))\right|\\
   &=&\frac{b(m+r+1)}{b((m+r+1)-(m+1)+2)}\\
  &=& \frac{(2m+2r-2)!r!}{2^{m-1}(m+r-1)!(2r)!},
\end{eqnarray*}
as $|{\cal E}(K(i, m+i))|=1$ and $b(k+1)=\frac{(2k-2)!}{2^{k-1}(k-1)!}$, which is the number of unrooted phylogenetic trees on $k+1$ taxa.

Let $k\leq n$. Let $T\in S(n, k)$ such that Leaves $i$ and $i+n$ are not in a cherry for each 
$i$ from 1 to $k$. We can then obtain a galled tree $G$ with $k$ reticulations on $[n]$ by rooting the tree at  Leaf 0 and  inserting Leaf 
$i+n$ into the edge leading to Leaf $i$ so that  $i+n$ becomes the reticulation parent of Leaf $i$  for each $i$, as illustrated in Figure~\ref{Ills_1C_galledTree}D.

Let $T\in S(n, k)$ in which $i$ and $i+n$ are in a cherry, where $i\leq k$.  
Let $c_i$ be the node that are adjacent to Leaves $i$ and $i+n$ in $T$.  Removing $i$ and $i+n$ together with the edges incident to them produces a unrooted tree that can also be generated by applying edge attachment 
from $F'=F(n, k) +\{P(c_i)\} -\{K(i, i+n\}$.  Conversely, we can obtain an unrooted phylogenetic tree of $S(n, k)$ in which $i$ and $i+n$ form a cherry  by attaching $i$ and $i+n$ below $c_i$ as its children from a unrooted tree that is built from $F'$ through edge attachment. For any subset $I$ of $[1, k]$,  $S(n, k)$ contains  exactly $|S(n, k-|I|)|$ unrooted phylogenetic trees in which $i$ and $i+n$ form a cherry for each $i\in I$. Since $[1, k]$ has ${k\choose j}$ subsets each containing $j$ integers, 
by Inclusion--Exclusion principle, $S(n, k)$ contains $\sum^{k}_{j=0}(-1)^{k-j}{k \choose j} \frac{(2n+2j-2)!j!}{2^{n-1}(n+j-1)!(2j)!}$ phylogenetic trees in which no pairs of $i$ 
and $i+n$ form a cherry. Since each galled tree with $k$ reticulations can be generated from $2^k$ trees of $S(n, k)$, we obtain Eqn.~(\ref{general_case}). 
\qed\end{proof}

\begin{corollary}  
\label{corollary3.6}
The number of one-component galled trees with $k$ reticulations on $[n]$ is equal to ${n\choose k}|{\cal OGT}_{n, k}|$.
\end{corollary}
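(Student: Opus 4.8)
The plan is to \emph{partition} the set of all one-component galled trees with $k$ reticulations on $[n]$ according to which leaves serve as the children of the reticulations, and then to argue by symmetry that every block of this partition has the same size, namely $|{\cal OGT}_{n, k}|$. The first point to establish is that in any one-component galled tree the $k$ reticulations have pairwise distinct leaf children. Indeed, in a binary RPN every leaf has indegree $1$, so it has a unique parent; if a leaf is the child of a reticulation, then that reticulation is its only parent, whence no two reticulations can share a common leaf child. Associating to each one-component galled tree $G$ the set $S(G)\subseteq[n]$ of its $k$ reticulation children is therefore a well-defined map, and its fibers partition the one-component galled trees with $k$ reticulations into blocks indexed by the $k$-element subsets $S\subseteq[n]$.

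Next I would fix such a subset $S$ and pick any permutation $\sigma$ of $[n]$ with $\sigma(S)=\{1,2,\dots,k\}$. Relabelling the leaves of a galled tree according to $\sigma$ is a graph isomorphism that only permutes leaf labels; it preserves both the galled-tree property and the one-component property, since these are conditions on the underlying topology and on the fact that reticulation children are leaves, none of which is affected by a leaf relabelling. Moreover it carries a tree with reticulation-child set $S$ to one with reticulation-child set $\{1,\dots,k\}$, and relabelling by $\sigma^{-1}$ undoes it. Hence $\sigma$ restricts to a bijection between the block indexed by $S$ and ${\cal OGT}_{n, k}$, so every block has cardinality exactly $|{\cal OGT}_{n, k}|$, the quantity given in Proposition~\ref{count_1cGalledTree}.

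Finally, since $[n]$ has exactly ${n \choose k}$ subsets of size $k$, summing the block sizes yields ${n \choose k}\,|{\cal OGT}_{n, k}|$ one-component galled trees with $k$ reticulations on $[n]$, as claimed. The argument is largely bookkeeping, and the step needing the most care is the distinctness of the reticulation children: it is precisely this that guarantees $S(G)$ always has size exactly $k$, so that the blocks are indexed cleanly by the $k$-subsets of $[n]$ and no tree is counted under two different subsets.
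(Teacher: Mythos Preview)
Your argument is correct and follows the same approach as the paper, which simply notes that ``each subset of $k$ taxa can be the children of the $k$ reticulations in a one-component galled tree.'' You have supplied the details the paper leaves implicit --- in particular, the observation that distinct reticulations have distinct leaf children (so that the blocks are indexed by $k$-subsets) and the explicit leaf-relabelling bijection showing all blocks have size $|{\cal OGT}_{n,k}|$.
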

\begin{proof} The result is derived from the fact that each subset of $k$ taxa can be the children of the $k$ reticulations in a one-component galled tree. 
\qed\end{proof}

\section{Counting TCNs}
\label{Sect4_TC}

Recall that  ${\cal TC}_{n, k}$  denotes the set of  TCNs with $k$ reticulations on $[n]$.
We will enumerate and count the networks of ${\cal TC}_{n, k}$  through enumerating their component graphs. 

\subsection{One-component TCNs}
\label{sect5}

One-component TCNs can be considered as the building blocks of arbitrary TCNs, as we will see later. We use $\mathcal{O}_{n, k}$ to denote the set of one-component TCNs with $k$ reticulations over $[n]$.
Recall that the child of each reticulation is a leaf in a network of  $\mathcal{O}_{n, k}$.

Let $N\in \mathcal{O}_{n, k}$.  For any $r\in \mathcal{R}(N)$, $N\ominus r$ is used to denote 
the network obtained from $N$ after the removal of $r$, its child $c(r)$,  the three edges incident to $r$,  as shown in Figure~\ref{Fig_InDel}.   In general, for any $R=\{r_1, r_2, \cdots, r_k\}\subseteq {\cal R}(N)$, we define:
$$N\ominus R=(\cdots ((N\ominus r_1)\ominus r_2)\ominus \cdots \ominus r_k).$$

\begin{figure}[b!]
            \centering
            \includegraphics[scale = 0.9]{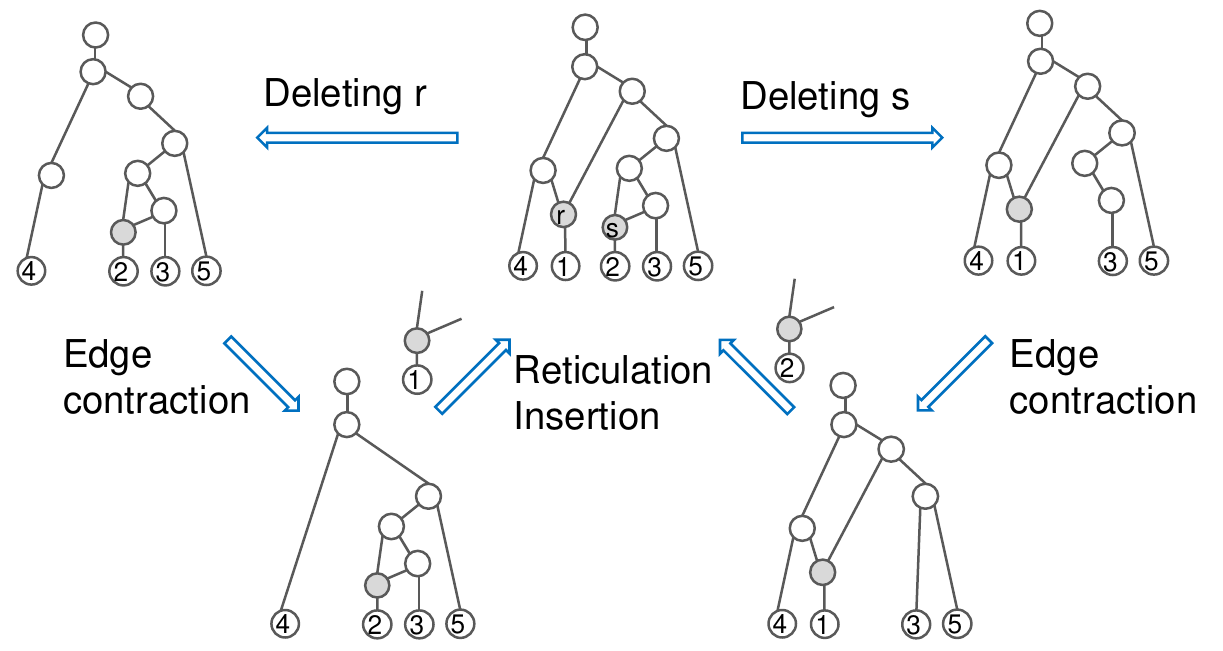}
           \caption{{\bf Illustration of reticulation insertions and deletions.} There are two types of insertions and deletions in tree-child networks: The added or removed reticulation straddles two tree edges (left) or is attached onto a single tree edge (right). }
            \label{Fig_InDel}
\end{figure}

A DAG $D$ is said to be a {\it subdivision} of a phylogenetic tree $T$, if (i)
${\cal V}(T)\subseteq {\cal V}(D)$,  (ii) each node of ${\cal V}(D)\backslash{\cal V}(T)$ is of indegree 1 and outdegree 1 and (iii) for
each $(u, v)\in {\cal E}(T)$, either $(u, v)\in {\cal E}(D)$ or there is a unique path from $u$ and $v$ passing only nodes in ${\cal V}(D)\backslash {\cal V}(T)$.

\begin{lemma}
Let $N\in \mathcal{O}_{n, k}$ such that ${\cal R}(N)=\{r_1, r_2, \cdots, r_k\}$.

\begin{enumerate}
\item For any $R\subset \mathcal{R}(N)$, $N\ominus R$ is
  also a TCN, where there may be some degree-2 nodes.

\item $N\ominus  \mathcal{R}(N)$ is the subtree $\bar{N}$ of $N$ spanned by leaves that are  the children of tree nodes. 

\item Let $T$ be the phylogenetic tree such that $N\ominus  \mathcal{R}(N)$ is a subdivision of $T$. $N$ can then be obtained from $T$ by inserting  the $k$ reticulations of ${\cal R}(N)$ together with their leaf children one by one. 
\end{enumerate}
\end{lemma}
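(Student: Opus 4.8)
The plan is to establish the three claims in sequence, since each relies structurally on the previous one. The core object is a one-component TCN $N\in\mathcal{O}_{n,k}$, where by definition every reticulation's child is a leaf and all tree nodes lie in a single non-trivial tree-component. The operation $N\ominus r$ removes a reticulation $r$, its leaf child $c(r)$, and the three incident edges (two reticulation edges entering $r$ and the tree edge from $r$ to $c(r)$), as depicted in Figure~\ref{Fig_InDel}.

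For part (1), I would argue that $N\ominus R$ remains tree-child by showing the removal of a single reticulation preserves the defining property, then iterate. When we delete $r$ and its leaf child $c(r)$, the two former parents of $r$ each lose a child; since $r$ was a reticulation, each parent was a tree node (outdegree 2) or possibly the root, and after deletion these become degree-2 nodes (indegree 1, outdegree 1), which is why the statement explicitly allows degree-2 nodes. The key point is that deleting $r$ cannot destroy the tree-child condition for any remaining non-leaf node $u$: the characterization in Section~\ref{sec2.2} says $N$ is tree-child iff from every non-leaf node there is a path of tree edges to some leaf. Since $r$ was a reticulation and its only outgoing edge led to a leaf, $r$ itself never served as the tree-child witness for any ancestor (a witnessing path uses tree edges, and the edges into $r$ are reticulation edges). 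Thus every witnessing tree-path in $N$ avoids $r$ as an internal node and survives the deletion, possibly shortened through a new degree-2 node. Iterating over the elements of $R$ gives the claim.

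For part (2), I would identify $N\ominus\mathcal{R}(N)$ explicitly. Removing all $k$ reticulations deletes exactly the $k$ reticulation nodes and their $k$ leaf children, leaving the tree nodes of $N$ together with the leaves that are children of tree nodes. Because $N$ is one-component, the tree nodes form a single connected tree-component, so what remains is connected and acyclic, hence a tree (with suppressible degree-2 nodes). This is precisely the subtree $\bar N$ spanned by the tree-node leaves, which I would confirm by a node/edge count using Proposition~\ref{prop2.1}: $N$ has $k+n-1$ tree nodes and $k$ reticulation leaves, and deleting the reticulation machinery leaves the $k+n-1$ tree nodes plus the $n-k$ non-reticulation leaves, all joined by the surviving tree edges.

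For part (3), the reverse direction, I would let $T$ be the phylogenetic tree obtained from $\bar N$ by suppressing degree-2 nodes, so $\bar N=N\ominus\mathcal{R}(N)$ is a subdivision of $T$ in the sense defined just above the lemma. The claim is that reinserting the $k$ reticulations one at a time, each straddling two tree edges or attaching to one tree edge (the two insertion types in Figure~\ref{Fig_InDel}) and reattaching the appropriate leaf child, reconstructs $N$. The main obstacle, and the step I expect to require the most care, is verifying that the insertions can be performed \emph{in some order} that is well-defined and reversible, i.e.\ that each reinsertion operates on edges of the intermediate subdivision actually present at that stage. This requires showing that the deletion order and the insertion order are compatible inverses: one should argue that the subdivision points needed for inserting $r_i$ are available because the tree edges of $T$ are subdivided by the retained structure, so that reversing the deletion sequence $r_1,\dots,r_k$ by inserting in order $r_k,\dots,r_1$ faithfully rebuilds $N$. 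Once the correctness of a single insert-as-inverse-of-delete step is established from the local picture in Figure~\ref{Fig_InDel}, the general statement follows by induction on $k$.
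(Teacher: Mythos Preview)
Your proposal is correct and tracks the paper's proof closely: both argue in part~(1) that the tree-paths witnessing the tree-child property avoid $r$ (since the edges into $r$ are reticulation edges) and hence survive deletion, and in part~(3) that the deletions are reversed by reinserting at the degree-$2$ subdivision nodes left behind in $N\ominus\mathcal{R}(N)$. The only minor variation is in part~(2): the paper establishes $N\ominus\mathcal{R}(N)=\bar N$ by an edge-by-edge argument, using part~(1) to show every surviving edge lies on a tree-path to a non-reticulation leaf, whereas you appeal directly to the one-component hypothesis to obtain connectedness and then confirm by a node/edge count via Proposition~\ref{prop2.1}; both arguments are valid, and yours makes the role of the one-component assumption more explicit. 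One small slip to correct: a parent of $r$ cannot be the root, since the root has outdegree~$1$ and by the tree-child property its unique child is a tree node or leaf, never a reticulation.
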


\begin{proof}
  \begin{enumerate}
  \item \label{it1} Let $r\in {\cal R}(N)$.  The parents of $r$ are both tree
    nodes and remain in $N\ominus r$, as we only remove $r$ and its
    unique children $c(r)$ to get $N\ominus r$. Therefore, each tree
    path in $N$ remains in $N\ominus r$, implying that $N\ominus r$ is
    a TCN. By definition, we remove reticulations of $R$ one by one to
    get $N\ominus R$ and thus $N\ominus R$ is a TCN for any subset
    $R\subseteq {\cal R}(N)$.

  \item \label{it2} Let $N'=N\ominus  \mathcal{R}(N)$.  Let  $(u, v)\in {\cal E}(N')$. By \ref{it1}, $N'$ is a TCN and thus there is a path from $v$ to some leaf $\ell$ consisting of only tree edges. 
Since $N'$ does not contain any reticulation, $v$ is a tree node, $\ell$ is not the child of any reticulation in $N$. This implies that $(u, v)$ is an edge of the subtree spanned by the leaves that are the children of tree nodes in $N$. 

Conversely, we observe that the operation $\ominus$ is conducted by removing only  edges that are incident to the removed reticulation. Since any edge $e$ in the subtree $\bar{N}$ is not incident to any reticulation, $e$ is also an edge in 
$N'$. Thus, $N'=\bar{N}$.

\item The statement follows from the fact that the tail of any removed edge is a degree-2 node in $N'$ that is introduced in \ref{it2}.\qed
\end{enumerate}
\end{proof}

Conversely, we consider  insertion of a reticulation with a leaf child into a one-component TCN. For such a TCN  $N$  on $A$ such that $A=\{a_1, a_2, \cdots, a_{n-m}\}\subset [n]$ ($m<n$),  any pair of tree edges $\{e_1, e_2\}\subset {\cal E}(N)$ and $a\not\in A$, 
we use $N(\{e_1, e_2\}, \oplus, a)$ to denote the network obtained from $N$ by  inserting a reticulation $r$,  together with its child Leaf $a$,  onto $e_1$ and $e_2$, as shown in Figure~\ref{Fig_InDel} (see \cite{Zhang_19}). Here, we allow the possibility that $e_1=e_2$.  
Formally,  let $e_1=(u_1, v_1)$ and $e_2=(u_2, v_2)$.  In this case, 
\begin{eqnarray*}
  &&{\cal V}\left(N(\{e_1, e_2\}, \oplus, a))\right)={\cal V}(N)\cup\{x_1, x_2, r, a\},\\
  && {\cal E}\left(N(\{e_1, e_2\}, \oplus, a))\right)\\
  &=&\left[{\cal E}(N)/\{e_1, e_2\}\right] \\
     & &\cup \left\{ \begin{array}{ll}
    \{(u_i, x_i), (x_i, v_i), (x_i, r), (r, a) \;|\; i=1, 2\}, & \mbox{ if } e_1\neq e_2,\\
    \{(u_1, x_1), (x_1, x_2), (x_2, v_1), (x_1, r),  (x_2, r), (r, a)\} , & \mbox{ if } e_1=e_2,
\end{array} \right.
\end{eqnarray*}
where $x_1$ and $x_2$ are the nodes that subdivide $e_1$ and $e_2$, respectively. 
\begin{lemma}  
\label{Lemma4}
Let $N$ be a one-component TCN on $A \subset [n]$.
  For any $b\in [n]\backslash A$ and four tree edges $e'_1, e'_2, e''_1, e''_2$ in $N$, 
   $N(\{e'_1, e'_2\}, \oplus, b)=N(\{e''_1, e''_2\}, \oplus, b)$ if and only if $\{e'_1, e'_2\}=\{e''_1, e''_2\}$.
 \end{lemma}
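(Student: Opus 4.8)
The plan is to exhibit the insertion $N(\{e_1,e_2\},\oplus,b)$ as the inverse of the reticulation-removal operation $\ominus$ of the preceding lemma, so that the unordered pair $\{e_1,e_2\}$ can be read off intrinsically from the network it produces. The backward implication costs almost nothing: the right-hand side of the definition of $N(\{e_1,e_2\},\oplus,b)$ depends only on the unordered pair $\{e_1,e_2\}$ and on $b$, since interchanging $e_1$ and $e_2$ merely renames the auxiliary subdivision nodes $x_1,x_2$. Hence equal edge-pairs produce the same network, and all the content lies in the forward direction.

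For the forward direction I would set $M_1 := N(\{e'_1,e'_2\},\oplus,b)$ and $M_2 := N(\{e''_1,e''_2\},\oplus,b)$, regard the hypothesis $M_1=M_2$ as giving a single network $M$, and recover $\{e'_1,e'_2\}$ from $M$ alone. Since $b\in[n]\backslash A$ is a freshly labelled leaf, it has a unique parent in $M$, and by construction that parent is the inserted reticulation $r$ with $c(r)=b$; as leaf labels are preserved, $r$ is canonically singled out by $M$. Let $\{p,q\}$ be the two parents of $r$ in $M$; this two-element set is likewise intrinsic to $M$. In the first realisation $\{p,q\}=\{x'_1,x'_2\}$ are the nodes subdividing $e'_1,e'_2$, and in the second $\{p,q\}=\{x''_1,x''_2\}$ subdivide $e''_1,e''_2$.

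Next I would separate the two branches of the definition by an intrinsic test: $e_1=e_2$ holds exactly when $p$ and $q$ are joined by an edge of $M$, whereas if $e_1\neq e_2$ the neighbours of each $x_i$ are $u_i,v_i\in{\cal V}(N)$ together with $r$, none of which is the fresh node on the other edge, so $p$ and $q$ are non-adjacent. As $M$ is fixed, both realisations fall into the same branch. In the non-adjacent branch I form $M\ominus r$, where $p$ and $q$ are the only degree-two nodes; suppressing $p$ then recovers an edge $(u,v)$ of $N$, with $u,v\in{\cal V}(N)$ its in- and out-neighbours, determined by $p$ alone. This edge is at once the member of $\{e'_1,e'_2\}$ and the member of $\{e''_1,e''_2\}$ subdivided by $p$; repeating for $q$ yields $\{e'_1,e'_2\}=\{e''_1,e''_2\}$. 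In the adjacent branch, suppressing both $p$ and $q$ in $M\ominus r$ recovers a single edge $e$ of $N$ equal to $e'_1=e'_2$ and to $e''_1=e''_2$, so the two singleton edge-sets again coincide.

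The step I expect to be most delicate is pinning down the meaning of the equality $M_1=M_2$ and verifying that the nodes I single out are truly canonical. Because the auxiliary vertices $x_1,x_2,r$ carry no labels, ``$=$'' must be read as an isomorphism of leaf-labelled networks; I therefore have to check that every such isomorphism fixes the leaf $b$, hence the reticulation $r$, hence the set $\{p,q\}$, and that it restricts to an isomorphism $M_1\ominus r\to M_2\ominus r$ under which the recovered edges of $N$ correspond. This reduction is clean precisely because a tree-child network admits no nontrivial leaf-label-preserving automorphism, which forbids an automorphism from carrying $\{e'_1,e'_2\}$ to a different pair; confirming this rigidity (equivalently, reading $=$ as equality of the explicitly constructed labelled structures) is what ultimately closes the argument.
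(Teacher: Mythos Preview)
Your proposal is correct and follows essentially the same route as the paper: identify the fresh leaf $b$, hence its reticulation parent $r$, hence the two parents of $r$, and then recover the subdivided edges of $N$; the paper phrases the final step as ``$\phi$ induces an auto-isomorphic map for $N$'' rather than your more explicit degree-two suppression in $M\ominus r$. Your closing paragraph correctly isolates the one point the paper leaves implicit, namely that a leaf-labelled tree-child network admits no nontrivial automorphism, which is what licenses the conclusion $\{e'_1,e'_2\}=\{e''_1,e''_2\}$.
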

 
\begin{proof}
 Let $p'_1$ and $p'_2$ be the parents of the reticulation $r'$ that are inserted  into $e'_1$ and $e'_2$, respectively,  in $N(\{e'_1, e'_2\}, \oplus, b)$. We also let $p''_1$ and $p''_2$ be the parents of the reticulation $r''$ that are inserted into $e''_1$ and $e''_2$, respectively,  in $N(\{e''_1, e''_2\}, \oplus, b)$. 
Assume $N(\{e'_1, e'_2\}, \oplus, b)=N(\{e''_1, e''_2\}, \oplus, b)$. There is then an isomorphic map $\phi$ from 
$N(\{e'_1, e'_2\}, \oplus, b)$ to 
 $N(\{e''_1, e''_2\}, \oplus, b)$
 that preserves the edges and leaves.  Since $\phi$ maps Leaf $b$ in the former to Leaf $b$ in the latter, $\phi(r')=r''$ and thus 
$\{\phi(p'), \phi(p'_2)\}=\{p''_1, p''_2\}$. This implies that  $e'_1=e'_2$ if and only if $e''_1=e''_2$.  

Note that $p'_i$ and $p''_i$ are tree nodes and have only  a  parent  for $i=1, 2$. No matter whether  $e'_1$ and $e'_2$ are identical or not, we can further deduce that the parent and child of $p'_1$ and $p'_2$ are
mapped to the parent and child of $\phi(p'_1)$ and $\phi(p'_2)$. Therefore, $\phi$ induces an auto-isomorphic map for 
$N$. This proves that  $\{e'_1, e'_2\}=\{e''_1, e''_2\}$.
\qed\end{proof}

For any $B\subset [n]\backslash A$, we use $N\oplus B$ to denote the set of all possible TCNs obtained by inserting $k$ reticulations that each have a leaf child labeled with a unique element in $B$. Lemma~\ref{Lemma4} implies that any TCN with $k$ reticulations $r_i$ ($1\leq i\leq k$) can be obtained from a phylogenetic tree spanned by the leaves that are not below any reticulations by sequentially inserting the corresponding reticulations one by one in a unique way. Moreover, 

\begin{lemma}
Let $T'$ and $T''$ be two phylogenetic trees on $A$ such that $A\subset [n]$.
For any $B\subseteq [n]\backslash A$,  $(T' \oplus A) \cap (T'' \oplus B)\neq \emptyset$ only if $T'=T''$.
\end{lemma}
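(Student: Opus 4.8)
The plan is to exhibit, for any network $N$ lying in the intersection, a single phylogenetic tree on $A$ that is canonically recoverable from $N$ alone, and then to argue that this canonical tree must coincide with both $T'$ and $T''$. The whole argument rests on the earlier decomposition lemma, which states that for a one-component TCN $N$ the subnetwork $N\ominus\mathcal{R}(N)$ equals the subtree $\bar{N}$ spanned by the leaves that are children of tree nodes, and that $\bar{N}$ is a subdivision of a unique phylogenetic tree.

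First I would fix a network $N$ in the intersection. Since $T'$ is already a tree on $A$, the operation $\oplus$ requires fresh labels disjoint from the current taxon set, so I read $T'\oplus A$ as $T'\oplus B$, meaning that we insert $k=|B|$ reticulations whose leaf children are labelled by the distinct elements of $B$. Because $N$ arises from the tree $T'$ on $A$ by the operation $\oplus$, which only subdivides tree edges and attaches a fresh reticulation together with a fresh leaf of $B$, every leaf labelled by $A$ remains the child of a tree node in $N$, while every leaf labelled by $B$ is the child of a reticulation. Consequently the reticulation children of $N$ are exactly those in $B$, and the leaves that are children of tree nodes are exactly those labelled by $A$; this identification does not refer to the insertion history, so it is an intrinsic property of $N$.

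Next I would apply the decomposition lemma to $N$. It gives $N\ominus\mathcal{R}(N)=\bar{N}$, the subtree spanned by the $A$-leaves, and exhibits a unique phylogenetic tree $T$ on $A$ of which $\bar{N}$ is a subdivision, obtained by suppressing the indegree-$1$, outdegree-$1$ nodes created when the reticulations are deleted. Since removing the inserted reticulations from $N$ and suppressing the resulting degree-$2$ nodes simply undoes the $\oplus$ operations that built $N$ from $T'$, we get $T=T'$. Running the identical argument with $T''$ in place of $T'$ yields $T=T''$, and therefore $T'=T''$.

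The step that needs the most care is the third one: verifying that $\bar{N}$, and hence the suppressed tree $T$, depends only on $N$ and not on the particular sequence of edge pairs chosen during the insertions. This is exactly the content of the preceding decomposition lemma together with the fact that $\oplus$ sends $A$-leaves to tree-node children and $B$-leaves to reticulation children; the single-edge case $e_1=e_2$ must be checked as well, but it behaves the same way, since there too the new leaf is attached below a reticulation and no $A$-leaf changes its status. Once this canonicity is in place the conclusion is immediate, so I expect no further obstacle.
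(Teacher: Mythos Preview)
Your argument is correct. Both your proof and the paper's hinge on the same observation: the subnetwork $N\ominus\mathcal{R}(N)$ is determined intrinsically by $N$ and is a subdivision of whatever tree $N$ was built from. You package this as a direct canonical-recovery argument: from any $N$ in the intersection, suppress the reticulations to obtain a unique phylogenetic tree $T$ on $A$, and observe that $T$ must equal both $T'$ and $T''$. The paper instead argues by contrapositive through the cluster characterisation of phylogenetic trees: assuming $T'\neq T''$, it picks a cluster $C_u$ present in $T'$ but not in $T''$, tracks how that cluster inflates to $C_u\cup B'$ in any $N\in T'\oplus B$, and then argues that stripping the reticulations from $N$ (viewed as a member of $T''\oplus B$) would force $C_u$ to appear in $T''$. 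Your route is shorter and avoids the detour through clusters; the paper's route makes explicit which invariant of the tree is being preserved under $\oplus$ and $\ominus$, at the cost of a slightly more indirect logical structure. Either way the essential content is the uniqueness of the tree underlying the subdivision $N\ominus\mathcal{R}(N)$, which you correctly identify as the crux.
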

\begin{proof}
  Assume that $T'$ and $T''$ are distinct phylogenetic trees on $A$. 
The fact that $T'\neq T''$ implies that there is a node $u$ in $T'$ 
such that the cluster $C_u$ of $u$ is not found in $T''$ (see \cite{Steel_book} for example). 

For any $B$,  let $N\in T'\oplus B$. In $N$, 
the hard cluster $C'_u$ of $u$ is equal to $C_u\cup B'$ for some $B'\subset B$. 
If $C'_u$ appears in a TCN $M\in T''\oplus B$, then, $C_u$ appears in $M \ominus R$, where $R$ is the set of 
reticulations whose leaf children have labels in $B$.  This implies that $C_u$ also appears in $T''$, which is a contradiction. 
Therefore, $(T' \oplus B)\cap (T''\oplus B)=\emptyset$ for any $B$ if $T'$ and $T''$ are distinct. 
\qed\end{proof}

\begin{theorem}
\label{thm1}
 For any $n$ and $k\leq n-1$, 
 \begin{eqnarray}
   |\mathcal{O}_{n, k}|= {n\choose k}\frac{(2n-2)!}{2^{n-1}(n-k-1)!}.
\end{eqnarray}
\end{theorem}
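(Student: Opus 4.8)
The plan is to invert the deletion operation $\ominus$: every network in $\mathcal{O}_{n,k}$ is to be decomposed into a base phylogenetic tree together with a sequence of reticulation insertions, and the insertions are then counted. First I would split off the choice of reticulation leaves. For $N\in\mathcal{O}_{n,k}$, the set $B$ of the $k$ leaves that are children of reticulations is a $k$-subset of $[n]$ determined by $N$, and by the decomposition lemma the graph $\bar N=N\ominus\mathcal{R}(N)$ is a subdivision of a phylogenetic tree $T$ on $A=[n]\setminus B$, which therefore has $m:=n-k$ leaves. Since $N$ determines $(B,T)$ uniquely, and since the lemma comparing the families of the form $T'\oplus B$ shows that distinct base trees give disjoint families of networks, the count factors as ${n\choose k}$ (choices of $B$) times the number of phylogenetic trees on $m$ leaves, namely $\frac{(2m-2)!}{2^{m-1}(m-1)!}$, times the number of networks obtainable from one fixed pair $(T,B)$.

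Next I would fix $(T,B)$ and count the insertions. I would order the reticulations canonically by their leaf labels and insert them one at a time via the operation $N(\{e_1,e_2\},\oplus,\cdot)$. Lemma~\ref{Lemma4} shows this insertion is injective in the chosen pair of edges, and since $\ominus$ inverts $\oplus$ (deleting the reticulations in reverse canonical order recovers each edge-pair), the map from slot-sequences to networks with data $(T,B)$ is a bijection, so no overcounting occurs. The crucial bookkeeping is which edges are eligible at each step: insertion must be onto \emph{tree} edges, but subdividing a reticulation-to-leaf edge $(r_i,b_i)$ would turn the child of $r_i$ into a tree node and destroy the one-component property, so these edges must be excluded from the pool.

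Starting from the $2m-1$ edges of $T$, each insertion adds three tree edges of which exactly one (the new reticulation-to-leaf edge) is thereafter ineligible, so after $j$ insertions there are $2m-1+2j$ eligible edges, a number independent of the earlier choices. The number of slots is then the number of unordered pairs of eligible edges allowing both subdivisions on a single edge, which is ${2m+2j\choose 2}$. Hence the number of networks from one $(T,B)$ is $\prod_{j=0}^{k-1}{2m+2j\choose 2}=2^{-k}\prod_{j=0}^{k-1}(2m+2j)(2m+2j-1)$, and the $2k$ factors $(2m+2j)(2m+2j-1)$ for $j=0,\dots,k-1$ are precisely the consecutive integers from $2m-1$ to $2m+2k-2=2n-2$, so the product telescopes to $\frac{(2n-2)!}{2^{k}(2m-2)!}$.

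Finally I would assemble the three factors: ${n\choose k}\cdot\frac{(2m-2)!}{2^{m-1}(m-1)!}\cdot\frac{(2n-2)!}{2^{k}(2m-2)!}={n\choose k}\frac{(2n-2)!}{2^{n-1}(n-k-1)!}$, using $m-1+k=n-1$ and $m-1=n-k-1$, which is the claimed formula. I expect the main obstacle to be exactly the eligibility bookkeeping in the third step: recognizing that the reticulation-to-leaf edges must be removed from the insertion pool is what turns the per-step counts into a product of consecutive integers and hence produces the clean closed form, whereas naively allowing insertion onto all tree edges would give the wrong (non-telescoping) product. The uniqueness needed to justify the bijection is then supplied by Lemma~\ref{Lemma4} together with the invertibility of $\ominus$.
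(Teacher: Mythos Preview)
Your proposal is correct and follows essentially the same approach as the paper: choose the $k$ reticulation leaves, fix a base tree on the remaining $n-k$ leaves, and count reticulation insertions step by step, relying on Lemma~\ref{Lemma4} for injectivity. The only cosmetic difference is that the paper tracks the edge count by speaking of the subdivided base tree $T_i$ (which has $2(n-k)-1+2i$ edges after $i$ insertions), whereas you track it as ``tree edges of the current network minus the reticulation-to-leaf edges''; these are the same set of edges, and your explicit remark that the $(r_i,b_i)$ edges must be excluded to preserve the one-component property is exactly why the paper's $T_i$ is the right object to insert into.
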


\begin{proof}
  Let $N\in\mathcal{O}_{n, k}$. $N\ominus {\cal R}(N)$ is a subdivision of one of the $t_{n-k}=\frac{(2(n-k)-2)!}{(n-k-1)!2^{n-k-1}}$ rooted phylogenetic trees, each having $n-k$ leaves in the non-trivial tree-component of $N$, which contains the network root.  Let $T$ be this tree. We have that $T$ contains $2(n-k)-1$ edges (Proposition~\ref{prop2.1}) and $N$ is obtained from $T$ by inserting $k$ reticulations.

To add a reticulation  $r$  into $T$, we either insert the two parents of $r$ into an edge of $T$ or insert them onto two distinct edges. Thus, a reticulation can be added into $T$ in  $(2n-2k-1)+{2n-2k-1\choose 2}= \frac{(2n-2k-1)(2n-2k)}{2}$ ways. 

After the first reticulation is added, $T$  is  subdivided into a tree $T_1$ with $2(n-k)+1$ edges in the resulting  network that has one reticulation. Therefore, the second reticulation can be added into $T_1$ in $\frac{(2n-2k+1)(2n-2k+2)}{2}$ ways.  By induction, for any $i$,
$T$ will further be subdivided into a tree $T_{i}$ with $2(n-k)-1+2i$ edges after the first $i$ reticulations are inserted and the $(i+1)$-th reticulation can be 
inserted in $\frac{ (2n-2k+2i-1)(2n-2k+2i)}{2}$ ways. Therefore,  according to Proposition~\ref{Lemma4}, by inserting $k$ reticulations one by one in $T$, we obtain:
  $$s_{n, k}=\frac{(2(n-k)-1)(2(n-k))(2(n-k)+1)(2(n-k)+2)\dots (2n-3)(2n-2)}{2^{k}}$$
TCNs of $\mathcal{O}_{n, k}$.   
Since any $k$ out of $n$ leaves can be selected to be the children of the $k$ reticulations, 
$$|\mathcal{O}_{n, k}|={n \choose k} \times t_{n-k}\times s_{n, k}={n\choose k}\frac{(2n-2)!}{2^{n-1}(n-k-1)!}$$
and we get the formula in the statement.
\qed\end{proof}

The counts of $\mathcal{O}_{n, k}$  for $1\leq k <n\leq 8$ are given in Table~\ref{Table1_1CTC}.  We remark that  $|\mathcal{O}_{n, k}|$ increases as $k$  increases for $k\leq n-\sqrt{n+1}$ and decreases  as $k$ increases for 
$k>n-\sqrt{n+1}$. This fact can be proved by considering the derivative of $|\mathcal{O}_{n, k}|/|\mathcal{O}_{n, k+1}|$.

{\small
\begin{table}
\centering 
\caption{Counts of one-component TCNs with $k$ reticulations on $[n]$,  $1\leq k <n\leq 7$. 
\label{Table1_1CTC}
}
\begin{tabular}{c|rrrrrrr}
 \hline
  $k$\textbackslash $n$ &   2 & 3 & 4 & 5 & 6 & 7 & 8 \\ 
\hline
1&  2& 18 &180   &2,100   & 28,350 &436,590 & 7,567,560 \\
2&   & 18 & 540  &12,600  &283,500 &  6,548,850 & 158,918,760 \\
3&  &      &360    &25,200 & 1,340,000  & 43,659,000 & 1,589,187,600  \\
4&  &      &        & 12,600 &  1,701,000   &130,977,000 &7,945,938,000 \\
5&  &      &       &            &    680,400   & 157,172,400 & 19,070,251,200  \\
6&  &      &       &            &                &  52,390,800  & 19,070,251,200\\
7&  &      &      &            &                &                  &  5,448,643,200\\
\hline
\end{tabular}
\end{table}
}

\subsection{Component graph}

We shall work on  component graphs to count arbitrary TCNs on $[n]$.  Let $N\in {\cal TC}_{n, k}$. Since $N$ has $k$ reticulations, it contains  $k+1$ tree-components, say, $ C_0, C_1, C_2, \cdots, C_k,$
where $C_0$ is the tree-component containing the network root and the other tree-components are each rooted at the children of the $k$ reticulations.  

Assume that the $k$ reticulations are $
\{r_i\;|\; 1\leq i\leq k\}$ and that the child of $r_i$ is the root of  the tree-component $C_i$. The {\it component graph} $G(N)$ is a direct graph that has the node  set 
$\{C_0,  C_1, \cdots, C_k\}$ and the edge set  
$\{ (C_i, C_j) \;|\;  \mbox{ $r_j$ has a parent  in $C_i$}\}.$
Since $N$ is a TCN,  the parents of $r_i$ are tree nodes for each $i$ and thus $G(N)$ is well defined and acyclic, in which the edges are oriented away from $C_0$. The component graph of the TCN in Figure~\ref{Fig_1}b is given in Figure~\ref{compG}.
 Here, we allow double edges between a pair of components $C_i$ and $C_j$, which indicates that 
 $r_j$ is an inner reticulation and its parents are both in $C_i$. 
Since $N$ is a TCN, each tree-component $C_i$  contains a subset $L_i$ of labeled leaves such that 
$L_1, L_2, \cdots, L_k$ form a partition of $[n]$. Hence, the component graphs of TCNs have a one-to-one correspondence with labeled DAGs with the property that all non-root nodes are each of indegree 2, 
 where the nodes are uniquely  labeled with the nonempty parts of a partition of $[n]$.

\begin{figure}[b!]
            \centering
            \includegraphics[scale = 1.5]{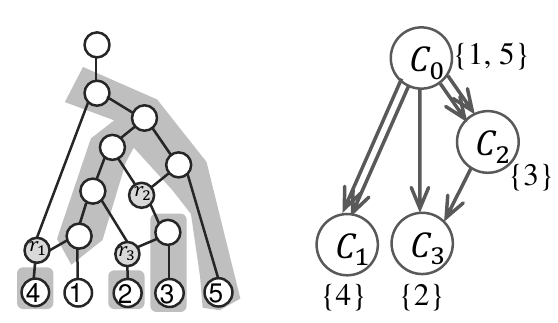}
           \caption{{\bf The component graph of the TCN in Figure~\ref{Fig_1}b}. The TCN has four components (left). $C_0$ consists of six internal tree nodes, Leaf $1$ and Leaf $5$; $C_2$ contains an internal node and Leaf $3$; $C_1$ and $C_3$  contain only a single leaf. In the component graph (right) each tree-component is labeled by the set of  leaves that appear in it.}
            \label{compG}
\end{figure}

In the rest of this subsection,  we will enumerate and count  the component graphs of TCNs as a class of rooted DAGs in which the nodes are uniquely labeled,  all nodes except  the root are of indegree 2 and two parallel edges with the same orientation between two nodes are allowed.  

Let $G$ be a rooted DAG. The height $h(u)$ of a node $u$ is recursively defined as:
  \begin{eqnarray}
   h(u)=\left\{\begin{array}{ll}
                   0  & \mbox{ if $u$ is a leaf},\\
                   1+\max_{v: (u, v)\in {\cal E}(G)} h(v)  & \mbox{ if $u$ is a non-leaf node}.
    \end{array} \right.
\end{eqnarray}
Since $G$ is acyclic, the height of each node can be computed via  a bottom-up approach. The 
{\it level} of $G$ is defined to be one plus the height of its root, denoted by $l(G)$. For any $0\leq k< l(G)$, the $k$-th {\it row} of $G$ is defined to be  the set of the nodes of height $k$, denoted by $R_k(G)$. Note that $R_0(G)$ consists exactly of all leaves of $G$. The properties of the node height and rows are summarized as follows (the proofs have been omitted). 

\begin{proposition}
\label{height_prop}
  Let $G$ be a rooted DAG of level $l$. 

  \begin{enumerate}
  \item ${\cal V}(G) = \cup_{0\leq k<l} R_k(G)$ and
    $R_i(G)\cap R_j(G)=\emptyset$ if $i\neq j$.
  \item    Let $u\in R_k(G)$. For any edge $(u, v)\in {\cal E}(G)$, $h(v)<k$. Moreover, there is a $v$ such that $(u, v)\in {\cal E}(G)$ and $h(v)=k-1$.
  \item  There is no edge between two nodes in each row. 
\end{enumerate}
\end{proposition}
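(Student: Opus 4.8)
The plan is to treat all three parts as consequences of one basic fact: the height $h$ strictly decreases along every directed edge. Indeed, if $(u,v)\in{\cal E}(G)$ then $u$ is a non-leaf node, so by definition $h(u)=1+\max_{w:(u,w)\in{\cal E}(G)}h(w)\ge 1+h(v)$, whence $h(v)\le h(u)-1<h(u)$. Because $G$ is finite and acyclic, the recursion defining $h$ terminates and assigns to each node a single nonnegative integer; hence every node lies in exactly one row $R_{h(u)}(G)$, and the rows are automatically pairwise disjoint. This already yields the disjointness half of Part~1.

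To finish Part~1 I would show that the root $\rho$ realizes the maximum height, so that every height lies in $\{0,1,\dots,l-1\}$ and the rows $R_0,\dots,R_{l-1}$ exhaust ${\cal V}(G)$. Here I would use that every node of $G$ is reachable from $\rho$: since all non-root nodes have indegree $2$, following in-edges backward from any node never gets stuck and, by acyclicity and finiteness, must terminate at the unique indegree-$0$ node, namely $\rho$; reversing this walk exhibits a directed path from $\rho$ to the node. Applying the strict-decrease property along such a path gives $h(\rho)\ge h(u)$ for every $u$, so $h(u)\le h(\rho)=l-1$ and ${\cal V}(G)=\bigcup_{0\le k<l}R_k(G)$. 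I expect this step to be the main obstacle, since it is the only place where the global structure of the rooted DAG, rather than the purely local height recursion, is needed; all the remaining arguments merely unwind the definition of $h$.

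Parts~2 and~3 then follow immediately. For Part~2, take $u\in R_k(G)$, so $h(u)=k$ and, as $u$ has a child, $u$ is a non-leaf node. The strict-decrease property gives $h(v)<k$ for every edge $(u,v)$, while the equality $h(u)=1+\max_{w:(u,w)\in{\cal E}(G)}h(w)$ shows that the maximum is attained by some child $v$ with $h(v)=k-1$. For Part~3, if $(u,v)$ were an edge with both endpoints in a single row $R_k(G)$, then $h(u)=h(v)=k$ would contradict $h(v)<h(u)$; hence no edge joins two nodes of the same row.
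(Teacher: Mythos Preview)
Your argument is correct, and in fact the paper omits the proof of this proposition entirely (it states ``the proofs have been omitted''), so there is nothing to compare against. One small sharpening: when you justify that every node is reachable from the root, you invoke the specific Indegree Constraint (all non-root nodes have indegree~$2$), but the proposition is stated for an arbitrary rooted DAG. All you actually need is that the root is the \emph{unique} node of indegree~$0$, so that every other node has at least one in-edge; your backtracking argument then goes through verbatim and matches the generality of the statement. Also, in Part~2 the ``Moreover'' clause tacitly assumes $k\ge 1$ (a node in $R_0(G)$ is a leaf and has no children); you might make that explicit, but it is a cosmetic point rather than a gap.
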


Proposition~\ref{height_prop} implies that we can construct a large rooted DAG by adding nodes row by row. For $m\geq 1$, we define ${\cal D}_{m}$ to be the set of all the rooted labeled DAGs with $m$ nodes that may have double parallel edges and satisfy the indegree constraint:
\begin{quote}
  ({\bf Indegree Constraint})   Every non-root node is of indegree 2. 
\end{quote}
 Clearly, ${\cal D}_{1}$ contains only the graph that has  an isolated node.

Let  $t$, $s$ and $m$ be positive integers such that $s<m$ and $t< m-s$.
 Consider  $G\in {\cal D}_{m-s}$ such that  $|R_0(G)|=t$.  For convenience, we set:
   $$R_0(G)=\{u_1, u_2, \cdots, u_t\}, \;\; {\cal V}(G)/R_0(G)=\{u_i\;|\; t+1\leq i\leq m-s\}.$$
 We can then extend $G$ to get different rooted DAGs $G'$  of ${\cal D}_m$ such that   $|R_0(G')|=s$ by:
  \begin{itemize}
  \item Adding  $s$ new nodes $v_1, v_2, \cdots, v_s$, and 
\item Adding two directed edges  $(u_a, v_i)$ and $(u_b, v_i)$ for each $i\in [1, s]$ such that 
there is at least an added edge leaving $u_i$ for each  $u_i\in R_0(G)$. Here,   if $u_a=u_b$, the two added edges become parallel edges between $u_a$ and $v_i$. 
\end{itemize}
 Figure~\ref{Fig4_GExtend} displays all eight possible  extensions from a graph (blue) that consists of two parallel edges from $a$ to $b$. Furthermore, Figures~\ref{Fig1_Appendix} and \ref{Fig2_Appendix} 
list  all the unlabeled component graphs with at most five nodes. 

\begin{figure}
            \centering
            \includegraphics[scale = 0.8]{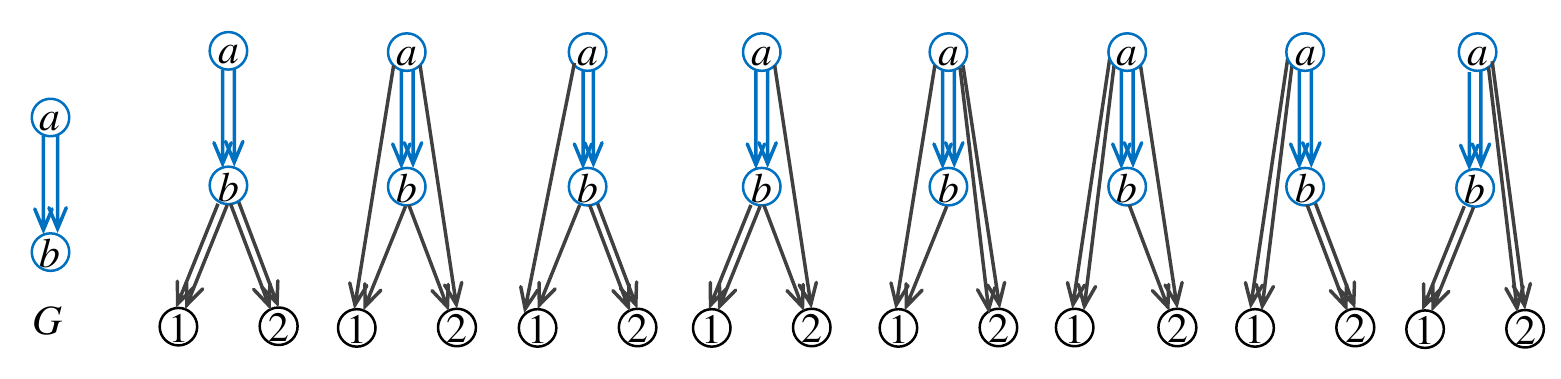}
           \caption{{\bf Illustration of graph extension.} $G$ (blue) consists of two parallel edges from  node $a$ to node $b$. 
   It can be extended into  eight non-isomorphic  labeled DAGs of level 3 by adding two new leaves. }
            \label{Fig4_GExtend}
\end{figure}

\begin{figure}
            \centering
            \includegraphics[scale = 1.0]{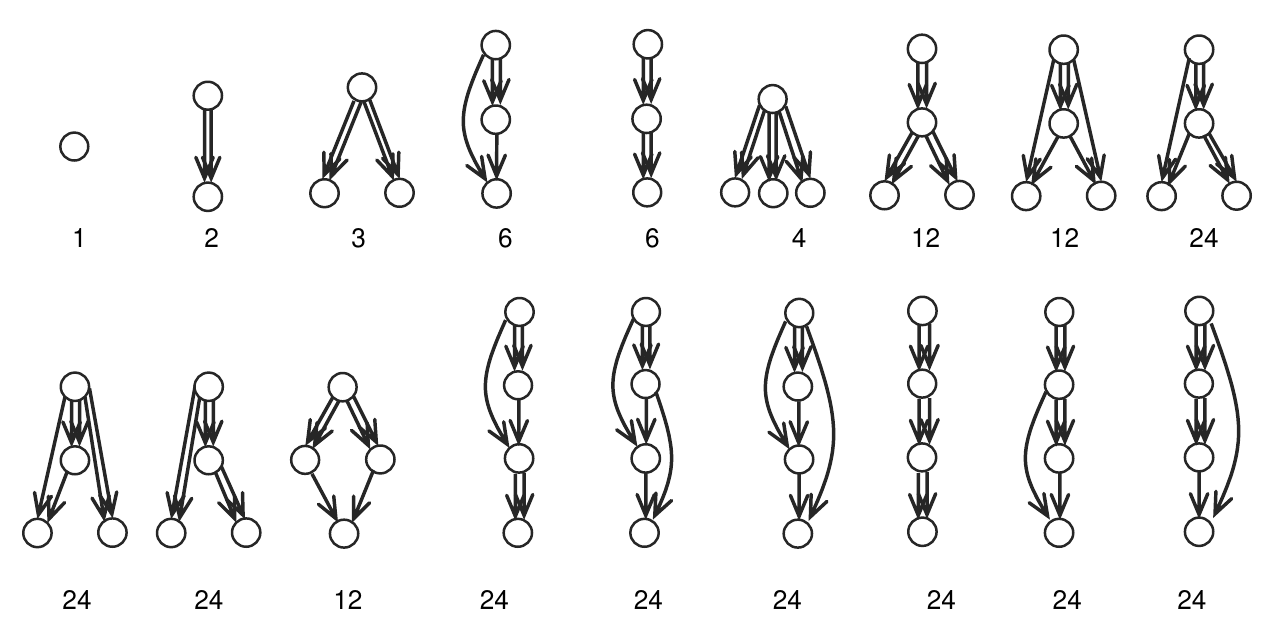}
           \caption{{\bf List of 18 unlabeled component graphs with 1 to 4 nodes.} The graphs are listed in increasing order according to level, in which the nodes are arranged row by row.   The number below each structure is the number of corresponding labeled component graphs. }
            \label{Fig1_Appendix}
\end{figure}

\begin{figure}
            \centering
            \includegraphics[scale =0.70]{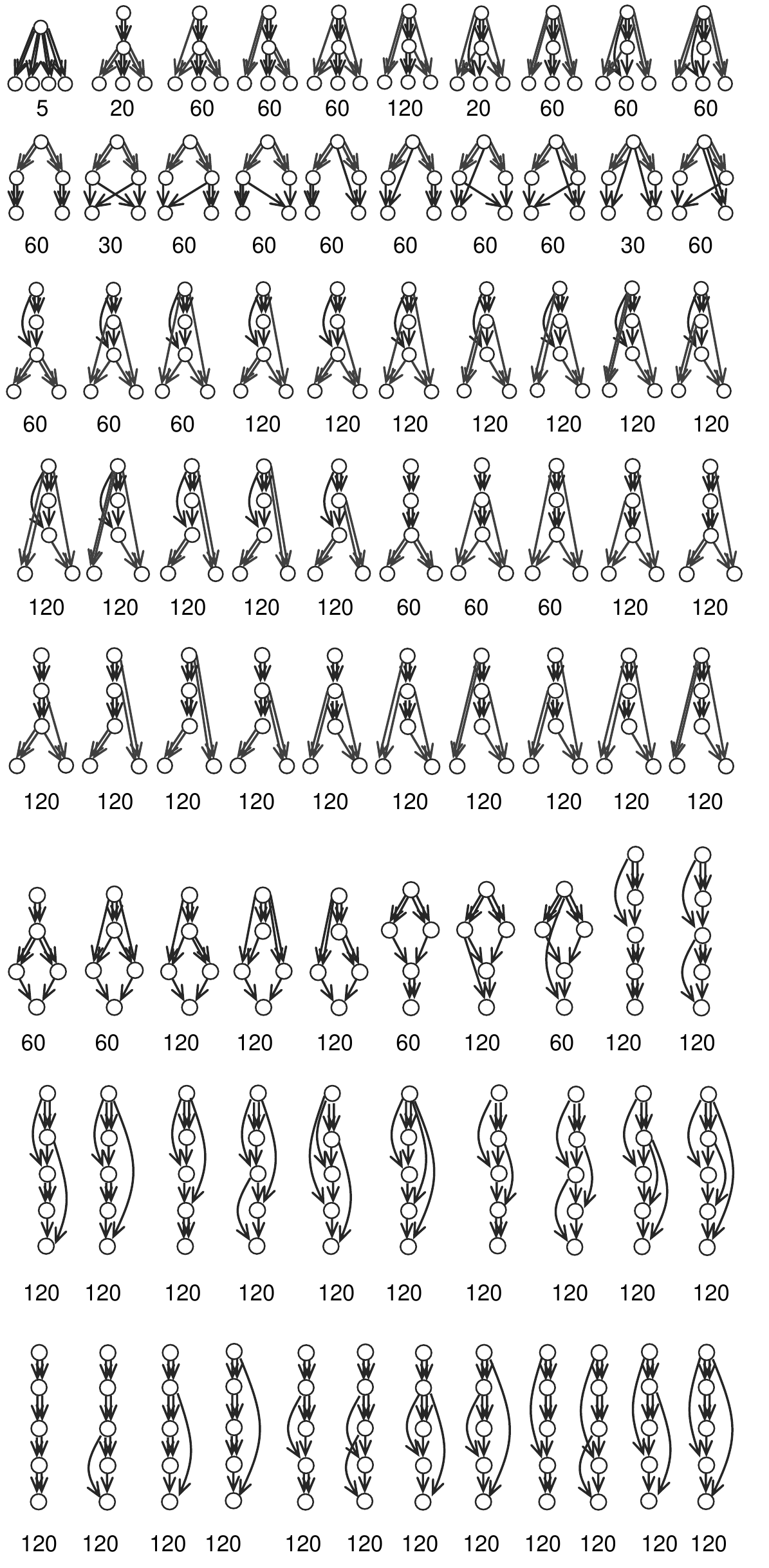}
           \caption{{\bf List of 82 unlabeled component graphs with 5 nodes.} The graphs are listed in increasing order according to level. The nodes of each graph are arranged row by row. The number below each structure is the number of corresponding labeled component graphs. There are 10 structures in all but the last row. }
            \label{Fig2_Appendix}
\end{figure}

\begin{theorem} 
 ${\cal D}_m$ denotes the set of labelled rooted DAGs in which the non-root nodes are of indegree 2 and double edges between two nodes are allowed.  Let $\alpha_m=|{\cal D}_m|$ and
$\alpha_m(s)=|\{ G\in {\cal D}_m \;|\;  |R_0(G)|=s\}|$.  The counts $\alpha_m$ and $\alpha_m(s)$ can  be computed via the following recurrence relations:
  \begin{eqnarray}
&& \alpha_1( 1)=1, \nonumber \\
&&  \alpha_m=\sum_{1\leq s\leq m-1} \alpha_m(s), m>1 \label{eqn_DAP_level}\\
&& \alpha_m(s) =  \sum_{1\leq t\leq \min (s/2, m-s-1)} {m \choose s} \beta(m, s, t)\alpha_{m-s}(t),   \label{eqn_DAP_level2}
\end{eqnarray}
where   
\begin{eqnarray}
 \beta(m, s, t)=\sum_{0\leq \ell \leq t} (-1)^{\ell}{t \choose \ell} {m-s-\ell+1\choose 2}^s
 \label{eqn_DAP_level3}
\end{eqnarray}
and we assume ${1\choose 2}=0$. 
\end{theorem}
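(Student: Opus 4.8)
The plan is to count $\mathcal{D}_m$ by stripping off its bottom row of leaves and recursing on the smaller graph. By Proposition~\ref{height_prop} every $G'\in\mathcal{D}_m$ has a well-defined leaf set $R_0(G')$, and the higher rows are exactly the leaf sets exposed by iterated peeling. I would therefore fix $s=|R_0(G')|$ and show that deleting these $s$ leaves together with the edges entering them yields a graph $G=G'\ominus R_0(G')$ that again lies in $\mathcal{D}_{m-s}$: the only deleted edges point \emph{into} removed leaves, so every surviving non-root node keeps indegree $2$, the root keeps indegree $0$, and acyclicity is preserved. One also checks $R_k(G)=R_{k+1}(G')$ for all $k$, so peeling is literally ``remove one row,'' matching the row-by-row viewpoint set up before the theorem.

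The heart of the argument is a bijection between $\{G'\in\mathcal{D}_m : |R_0(G')|=s\}$ and pairs consisting of a residual graph $G\in\mathcal{D}_{m-s}$ together with an admissible attachment. In the rebuilding (forward) direction, given $G$ with $R_0(G)=\{u_1,\ldots,u_t\}$, I add $s$ new nodes $v_1,\ldots,v_s$, each receiving two incoming edges whose sources range over all $m-s$ nodes of $G$ (a repeated source yielding a double edge). Since the $v_j$ are sinks and attaching children can only destroy leaf status, the result satisfies $R_0(G')=\{v_1,\ldots,v_s\}$ \emph{precisely} when every old leaf $u_i$ is chosen as a parent at least once. Conversely, peeling $R_0(G')$ recovers both $G$ and the attachment, and here the covering condition holds automatically: a node of $G$ is a non-leaf of $G'$, hence has a child in $G'$, which must lie in $R_0(G')$ exactly when that node is a leaf of $G$. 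Thus peeling and rebuilding are mutually inverse.

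I then count the three independent choices. The $s$ leaf labels form any $s$-subset of $[m]$, contributing ${m\choose s}$; the residual graph on the remaining $m-s$ labels, when it has $t$ leaves, is counted by $\alpha_{m-s}(t)$ (the count depending only on the cardinality of the label set); and the number of admissible attachments depends only on $s$, $m-s$ and $t$, which I denote $\beta(m,s,t)$. Summing over $t$ gives Eqn.~(\ref{eqn_DAP_level2}), and summing $\alpha_m(s)$ over $s$ gives Eqn.~(\ref{eqn_DAP_level}), with base case $\alpha_1(1)=1$ for the isolated node.

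The remaining and, I expect, most delicate step is the evaluation of $\beta(m,s,t)$, which I would obtain by inclusion--exclusion. Each of the $s$ labelled new nodes independently selects an unordered pair of parents with repetition among the $m-s$ nodes of $G$, giving ${m-s+1\choose 2}$ choices per node and ${m-s+1\choose 2}^s$ unconstrained attachments. To enforce that all $t$ designated leaves are covered, I sieve over the subsets of missed leaves: forbidding a prescribed set of $\ell$ leaves as sources leaves ${m-s-\ell+1\choose 2}$ choices per new node, and there are ${t\choose \ell}$ such sets, producing Eqn.~(\ref{eqn_DAP_level3}) with the convention ${1\choose 2}=0$ handling the degenerate case where no admissible source survives. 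The feasibility bound $t\le 2s$ is then automatic, since $\beta(m,s,t)=0$ once $t$ exceeds the $2s$ available parent slots, while $\alpha_{m-s}(t)=0$ beyond the maximal number of leaves of an $(m-s)$-node graph; these vanishing terms justify truncating the summation range. The points needing care are confirming that the alternating signs enforce exactly ``all $t$ leaves covered'' and that the double-edge (repeated-parent) possibility is consistently absorbed into the ${\,\cdot+1\choose 2}$ count.
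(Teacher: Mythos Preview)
Your proposal is correct and follows the same strategy as the paper: peel off the bottom row $R_0$, recurse on the residual DAG in $\mathcal{D}_{m-s}$, and count admissible attachments of the $s$ new leaves by inclusion--exclusion over the set of old leaves left uncovered, which yields exactly the expression for $\beta(m,s,t)$. Your write-up is in fact slightly more careful than the paper's---you make the bijection and the row shift $R_k(G)=R_{k+1}(G')$ explicit, and you state the feasibility constraint as $t\le 2s$, which is the correct reading (the paper's printed bound $t\le s/2$ appears to be a slip for $2s$).
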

\begin{proof}
Both  $\alpha_1(1)=1$ and Eqn.~(\ref{eqn_DAP_level}) are straightforward. 

To prove Eqn.~(\ref{eqn_DAP_level2}), we consider how the edges are added between the new nodes $v_i$ and the nodes in $G$.  For each new node $v_i$,  the two edges entering  $v_i$ will be added from a common node  or from two different nodes $u_j$ and $u_k$ ($k\neq j$). Therefore, 
 all the $s$ new nodes  can be connected to $G$ in $\left({y \choose 2}+y\right)^s= {y+1\choose 2}^s$ ways if the edges have  $y$ possible tails  for $y\leq m-s$.    By the Principle of Inclusion--Exclusion, the constraint that $u_i$ must be connected to some $v_j$ for each $i\leq t$ implies that the nodes in $G$ are connected to $u_i$'s in $ \beta(m, s, t)$ ways.

Since each node $u_i$  is of indegree 2, the constraint that each $v_i\in R_0(G)$ has to be connected to at least one new node implies that 
$t\leq s/2$. Moreover, $t$ is the number of leaves in $G$. Since $G$ has  $m-s$ nodes, $t\leq m-s-1$, where $m-s>1$.
For each of the ${m\choose s}$ labelings of the new notes $v_i$, there are $\alpha(m-s, t)$ DAGs to be extended. 
Taken together, these two facts imply the formula in Eqn.~(\ref{eqn_DAP_level3}). 
\qed\end{proof}

\begin{remark}
(a)  The count $\alpha_m$ is actually the number of all labeled rooted RPNs in which every tree node is of outdegree 2 or more  and each reticulation is of exact indegree 2. These networks are called 
{\it bicombining RPNs} in literature \cite[page 140]{Huson_book}. 

(b)  Figures~\ref{Fig1_Appendix} and \ref{Fig2_Appendix}  list actually  all the unlabeled component graphs of all RPNs with zero  to four reticulations. However, for a RPN that is not a tree-child network, each tree-component of it may or may not contain any network leaves  and some components may be empty if there exist adjacent reticulations. We will use this fact for counting galled networks and arbitrary RPNs with only two reticulations in Section~\ref{section5.3}. 
\end{remark}

\subsection{Arbitrary TCNs}

Now we are able to  enumerate all the TCNs on $[n]$ by further extending 
all the component graphs in $\cup_{1\leq m\leq n}{\cal D}_{m}$.  

An $m$-{\it partition} of a set is a partition of the set into exactly $m$ non-empty parts.  
Let $1\leq m< n$ and $\pi$ be a partition of $[n]$ that divides $[n]$ into $m+1$ non-empty parts, say, 
$\{B_i\}^{m}_{0}$. We consider  all the $\alpha_{m+1}$ graphs $G_j$ in
${\cal D}_{m+1}$.  We further assume that all the graphs in ${\cal D}_{m+1}$ have nodes labeled by integers from 0 to $m$.  
We  extend all $G_j$'s into TCNs by reversing the network compression process presented in Table~\ref{algorithm}.\\

\begin{table}[!t]
\label{algorithm}
\caption{An algorithm for enumeration of tree-child networks.}
{\small
\begin{tabular}{|l|}
\hline 
    \hspace*{8em}{\sc TCN Enumeration Algorithm}\\
\\
   {\bf Input}: A $(m+1)$-partition $\pi$ with partition blocks $\{B_i\}^{m}_{0}$ and ${\cal D}_{m+1}$;\\
   {\bf Output}:  All the TCNs with $m$ reticulations on $[n]$ extending from the graphs in ${\cal D}_{m+1}$;\\
\\ 
   ${\cal TC}({\cal}{D}_{m+1}, \pi)=\emptyset$;   /*{\tt the set of TCNs extending from $G$s in
 ${\cal D}_{m+1}$ using $\pi$} */ \\
   {\bf for} each ordered list  of $(m+1)$ phylogenetic trees $T_i$ such that $T_i$ is on $B_i$, {\bf do} \\
\hspace*{1em}{\bf for} each $G\in {\cal D}_{m+1}$ with $m+1$ nodes $v_i$ with label  $i$ and outdegree $d_i$, {\bf do} \{\\
   \hspace*{2em} S0. Replace the root $v_0$ with $T_0$;\\
  \hspace*{2em} S1. If $i>0$, change $v_i$ to a reticulation $r_i$  and attach the tree $T_i$ below $r_i$ \\
  \hspace*{4.5em}by identifying the root of $T_i$ with the child of $r_i$; \\
   \hspace*{2em} S2. For each $i$, exhaustively select a {\it ordered} list of edges $\{e_j\}^{d_i}_1$ of $T_i$, where\\
  \hspace*{2em} ~~~~~~$e_i$'s can be identical,  and {\bf do} \{\\
  \hspace*{4.5em}  S2.1. Insert the tail of $j$-th edge leaving $v_i$ into the corresponding edges $e_j$,\\
 \hspace*{4.5em} ~~~~~~ where if $e_j=e_k$, the relative positions of the tails of the two edges\\
  \hspace*{4.5em} ~~~~~~ will be considered;\\
\hspace*{4.5em} S2.2.  Add the resulting TCN into ${\cal TC}({\cal}{D}_{m+1}, \pi)$;\\
 \hspace*{4.5em} ~~~~~~ /* multiple graphs will be added at this step */\\
  \hspace*{4em} \} /* end do in S2 */\\
\hspace*{1.5em}\} /* end for */\\
\} /* end for */\\
 \hline
\end{tabular}
}
\end{table}
\vspace{1em}

\begin{theorem}
Let $\gamma_n(m)$ denote the number of TCNs with $m$ reticulations on $[n]$ and let
$\Pi_{n, m+1}$ be the set of the $(m+1)$-partitions of $[n]$. The count $\gamma_n(m)$ can be computed via the following formula:
\begin{eqnarray}
\label{tcn_formula}
   \gamma_n(m) =\frac{1}{2^{n+m-1}}\sum_{\{B_i\}^{m}_{i=0}\in \Pi_{n, m+1}}\sum_{G\in {\cal D}_{m+1}} \prod_{i=0}^{m} \frac{2^{c_i}(2|B_i|+d_i-2)!}{(|B_i|-1)!},
\end{eqnarray}
where $d_i$ and $c_i$ are the number of the outgoing edges and the children of  each node $v_i$ of $G$, respectively.
\end{theorem}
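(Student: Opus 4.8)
The plan is to partition the networks of $\mathcal{TC}_{n,m}$ according to their component graph and to count each class separately, then sum over all component graphs and all partitions of $[n]$. The decomposition lemma together with Lemma~\ref{Lemma4} already shows that each one-component building block is recovered uniquely from the positions at which its reticulation parents are inserted; combined with the reconstruction encoded by the TCN Enumeration Algorithm, this yields a correspondence between TCNs and \emph{decorated} component graphs, where a decoration assigns to each node $v_i$ (i.e. to each tree-component $C_i$) a phylogenetic tree on its block $B_i$ together with a placement of the $d_i$ reticulation-parent tails leaving $C_i$. First I would fix a partition $\{B_i\}_{i=0}^m$ (with $B_0$ the root block) and a graph $G\in\mathcal{D}_{m+1}$, and count the networks carrying exactly this block data and this structure.

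For a single component $C_i$ the count factors as (number of phylogenetic trees on $B_i$) $\times$ (number of ways to attach the $d_i$ outgoing reticulation edges). There are $t_{|B_i|}=\frac{(2|B_i|-2)!}{2^{|B_i|-1}(|B_i|-1)!}$ rooted phylogenetic trees on $B_i$, each having $2|B_i|-1$ edges available for subdivision by Proposition~\ref{prop2.1} (for $i>0$ this count includes the pendant edge above the root of $C_i$, which is subdividable). Reusing the sequential-insertion argument from the proof of Theorem~\ref{thm1}, inserting the $d_i$ tails one at a time, so that the $j$-th insertion chooses among the current $2|B_i|-1+j-1$ edges with relative order recorded when two tails land on one edge, contributes $\frac{(2|B_i|+d_i-2)!}{(2|B_i|-2)!}$ labelled placements. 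Multiplying and taking the product over $i$ gives $\prod_{i}\frac{(2|B_i|+d_i-2)!}{2^{|B_i|-1}(|B_i|-1)!}$ labelled decorations.

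These labelled placements treat all $d_i$ tails leaving $C_i$ as distinguishable, whereas the two tails feeding a single \emph{inner} reticulation are interchangeable and their swap yields an isomorphic network; note that when such a pair lands on the same edge the two recorded orders collapse, so the correction is uniformly a factor of two per inner reticulation. Each inner reticulation is exactly a double edge of $G$, so if $p$ is the number of double edges the true count for the fixed $(\{B_i\},G)$ is $2^{-p}\prod_i\frac{(2|B_i|+d_i-2)!}{2^{|B_i|-1}(|B_i|-1)!}$. The two normalizations then collapse cleanly: every non-root node has indegree $2$, so $\sum_i d_i=2m$, and since each double edge contributes $2$ to $\sum_i d_i$ but only $1$ to $\sum_i c_i$ we get $p=\sum_i d_i-\sum_i c_i=2m-\sum_i c_i$; together with $\sum_i(|B_i|-1)=n-m-1$ this converts $2^{-p}\prod_i 2^{-(|B_i|-1)}$ into $2^{\sum_i c_i}/2^{n+m-1}$, reproducing the summand $\frac{1}{2^{n+m-1}}\prod_i\frac{2^{c_i}(2|B_i|+d_i-2)!}{(|B_i|-1)!}$.

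Finally I would sum over all $G\in\mathcal{D}_{m+1}$ and all partitions, and here lies the main obstacle, which is pure bookkeeping but easy to get wrong. Because the reticulations carry no intrinsic labels, each TCN must be recorded once for its distinguished root component $C_0$, while the remaining components have to be treated as an \emph{unordered} family; otherwise one over-counts by a factor $m!$ (already visible for $n=3$, $m=2$, where a direct enumeration gives $42$ rather than $84$). I would therefore argue that, for a fixed partition with distinguished root block $B_0$ and any fixed indexing of the non-root blocks, the inner sum $\sum_{G\in\mathcal{D}_{m+1}}$ ranges over all component-graph structures on those labelled blocks and counts each network with that block data exactly once, since permuting the non-root indices merely relabels which $G$ a given network realizes while leaving the sum invariant; summing over the choice of $B_0$ and of the partition then counts every TCN exactly once. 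The two places demanding genuine care are this once-only counting and the exact match between the sequential-insertion count and $(2|B_i|+d_i-2)!/(2|B_i|-2)!$ in the presence of repeated edges and interchangeable inner-reticulation tails.
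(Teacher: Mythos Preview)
Your argument is correct and follows essentially the same route as the paper's own proof: fix a component graph $G\in\mathcal{D}_{m+1}$ and a block assignment $\{B_i\}$, count the phylogenetic trees on each $B_i$, multiply by the sequential-insertion factor $(2b_i-1)(2b_i)\cdots(2b_i+d_i-2)=\frac{(2b_i+d_i-2)!}{(2b_i-2)!}$, correct by $2^{-(d_i-c_i)}$ for each pair of parallel edges, and simplify using $\sum_i b_i=n$ and $\sum_i d_i=2m$. Your handling of the outer summation---arguing that $\sum_{G\in\mathcal{D}_{m+1}}$ is invariant under reindexing the non-root blocks, so that an unordered partition with a distinguished root block $B_0$ contributes each network exactly once---is in fact more careful than the paper, which simply asserts the summed formula without addressing this point.
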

\begin{proof}
Let $G\in {\cal D}_{m+1}$ and $\pi=\{B_i\}^{m}_{0}$ be a $(m+1)$-partition of 
$[n]$. 
We also let $b_i=|B_i|$ for $0\leq i\leq m$.
There are $\frac{(2b_i-2)!}{2^{(b_i-1)}(b_i-1)!}$ phylogenetic trees $T_i$ on $B_i$ for expanding $v_i$. 
Since each phylogenetic tree on $B_i$ has $2b_i-1$ edges and the number of tree edges increases by 1 after an edge is inserted in the tree,
the $d_i$ edges leaving $v_i$ can be inserted in the tree in $(2b_i-1)(2b_i)\cdots (2b_i+d_i-2)$ ways.
However, if two edges $e'$ and $e''$ leaving $v_i$ lead to the same node,  then inserting $e'$ in an edge 
$x$ and inserting $e''$ in another edge $y$ produces the same tree as inserting $e'$ in $y$ and inserting $e''$ in $x$.
Let $c_i=\{x \;|\; x\in {\cal V}(G): (v_i, x)\in {\cal E}(G)\}$. Then, there are parallel edges between $v_i$
and  $d_i-c_i$ neighbors.

Since $\sum_{0\leq i\leq m}b_i=n$ and $\sum_{0\leq i\leq m}d_i=2m$, we can count all the possible extensions of $G$ with $\pi$ as:
\begin{eqnarray*}
    \gamma_n(m)&=&\prod_{j=0}^{m} \frac{(2b_i-2)!}{2^{(b_i-1)}(b_i-1)!}\cdot \frac{(2b_i-1)(2b_i)
\cdots (2b_i+d_i-2)}{2^{d_i-c_i}}\\
   &=& \frac{1}{2^{n+m-1}} \prod_{j=0}^{m} \frac{2^{c_i}(2b_i+d_i-2)!}{(b_i-1)!},
\end{eqnarray*}
implying Eqn.~(\ref{tcn_formula}).  
\qed\end{proof}

The number of TCNs with $k$ reticulations on $[n]$ for $1\leq k < n$ and $3\leq n\leq 8$ are computed and listed in Table~\ref{Table2_TCN}. This results can be checked using the python and SAGE \cite{SAGE} scripts that can be dowloaded from
\url{https://github.com/bielcardona/BTC_fixed_h}.
The computations in this table suggested the following result.

\begin{table}
\centering 
\caption{Counts of TCNs with $k$ reticulations on $[n]$,  where $1\leq k <n$ and $3\leq n\leq 8$.
The last row contains the total numbers of TCNs that are not  phylogenetic trees, where the 
counts for $n\leq 7$ were  first obtained by Cardona et al. in \cite{Cardona_19}.
\label{Table2_TCN}
}
{\small
\begin{tabular}{c|rrrrrrr}
 \hline
  $k$\textbackslash $n$ & 2& 3 & 4& 5 & 6 & 7 & 8 \\ 
\hline
 1   &2 &  21 & 228 & 2,805    &39,330 & 623,385                  & 11,055,240 \\
 2   & & 42 & 1,272 & 30,300 &696,600  &16,418,430              &  405,755,280\\
3    & &     &  2,544 & 154,500   &6,494,400  & 241,204,950     & 8,609,378,400\\
4    & &     &          &309,000    &31,534,200   & 2,068,516,800 &  113,376,463,200\\
5    & &     &         &               &63,068,400 &9,737,380,800     &920,900,131,200 \\
6   &  &     &         &              &                &19,474,761,600     & 4,242,782,275,200 \\
7   &  &     &        &               &                &                          & 8,485,564,550,400\\
total & 2& 63 & 4,044 & 496,605 & 101,832,930 & 31,538,905,965 & 13,769,649,608,920 \\
\hline
\end{tabular}
}
\end{table}

\begin{proposition}
  \label{thm:maximal}
  For each integer $n\ge3$, the number of TCNs with maximum number of reticulations (i.e. $k=n-1$) is twice the number of those with one less reticulation. In symbols, $|{\cal TC}_{n, n-1}|=2|{\cal TC}_{n, n-2}|$.
\end{proposition}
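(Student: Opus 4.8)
The plan is to read the proposition off the master formula for $\gamma_n(m)$ of the preceding theorem, specialised to the two largest reticulation numbers. Since a network with $k$ reticulations has $k+1$ tree-components whose leaf-sets partition $[n]$, for $k=n-1$ the $n$ tree-components must be $n$ singletons, while for $k=n-2$ the $n-1$ tree-components consist of one block of size $2$ together with $n-2$ singletons. First I would record this collapse of $\Pi_{n,n}$ and $\Pi_{n,n-1}$: in the first case every block size satisfies $b_i=1$, and in the second exactly one block, sitting at a node $j^{*}$ of the component graph, has $b_{j^{*}}=2$.

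Next I would evaluate the contribution of a single component graph $G$ in each case. If $v_i$ has out-degree $d_i$ and $c_i$ children, then the number of tree-child networks realising $G$ when all blocks are singletons is $a(G)=\prod_i d_i!/2^{\,d_i-c_i}$, since each node is expanded into a path by linearly ordering its outgoing reticulation edges while regarding the two edges of a double edge as interchangeable. Turning the block at $j^{*}$ into a size-$2$ block replaces its factor $d_{j^{*}}!/2^{\,d_{j^{*}}-c_{j^{*}}}$ by $(d_{j^{*}}+2)!/2^{\,d_{j^{*}}-c_{j^{*}}+1}$, that is, multiplies it by $\frac{1}{2}(d_{j^{*}}+1)(d_{j^{*}}+2)$. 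Summing over all labelled component graphs then yields
\begin{eqnarray*}
  |{\cal TC}_{n,n-1}| &=& \sum_{G} a(G), \\
  |{\cal TC}_{n,n-2}| &=& \frac{1}{2}\sum_{G'} a(G')\,(d_{j^{*}}+1)(d_{j^{*}}+2),
\end{eqnarray*}
where the first sum runs over all labelled acyclic digraphs on $n$ singleton-labelled nodes that have a single root, have every non-root node of in-degree $2$, and may contain double edges, and the second over the analogous digraphs on $n-1$ nodes one of whose labels (at $j^{*}$) is a $2$-element set. Hence the proposition is equivalent to the identity
\begin{eqnarray}
\label{eq:star}
  \sum_{G} a(G)=\sum_{G'} a(G')\,(d_{j^{*}}+1)(d_{j^{*}}+2).
\end{eqnarray}

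To establish (\ref{eq:star}) I would construct a bijection between the networks counted on the left and the leaf-marked networks counted on the right; indeed $a(G')(d_{j^{*}}+1)(d_{j^{*}}+2)=2\,a'(G',j^{*})$, where $a'(G',j^{*})$ is the true number of realisations of $G'$ having a $2$-leaf component, so the right-hand side counts a network of ${\cal TC}_{n,n-2}$ together with a choice of one of the two leaves of its $2$-leaf component. Given a network of ${\cal TC}_{n,n-1}$ I would contract a canonically chosen reticulation, merging its singleton component into a neighbour to create a $2$-leaf component and marking the leaf pulled in, the factor $(d_{j^{*}}+1)(d_{j^{*}}+2)$ recording the two positions in the caterpillar of the enlarged component at which the parents of the contracted reticulation had sat. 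The step I expect to be the main obstacle is the well-definedness and invertibility of this contraction: the two parents of a reticulation may lie in the same component or in two different components, and one has to verify that the single factor $(d_{j^{*}}+1)(d_{j^{*}}+2)$, which refers only to the out-edges at $j^{*}$, faithfully encodes both possibilities and that no target network is reached twice. If a clean global bijection proves elusive, I would instead prove (\ref{eq:star}) by induction on $n$ using the row-by-row extension recurrence for ${\cal D}_m$ from the preceding theorem, transporting the weight $a$ across the insertion of one new leaf, the chief nuisance again being the bookkeeping of the powers of $2$ contributed by double edges.
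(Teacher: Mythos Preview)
Your route is genuinely different from the paper's, and in a sense you arrive back at the same door after a long detour. The paper never touches the formula for $\gamma_n(m)$. It gives a direct, one-paragraph $2$-to-$1$ surjection $\psi:{\cal TC}_{n,n-1}\to{\cal TC}_{n,n-2}$: since every tree-component of a maximal TCN is a path, the two grandchildren of the root are a tree node and a reticulation $v$; delete the edge into $v$ coming from the root side and suppress the two resulting degree-$2$ nodes. Conversely, a TCN with $n-2$ reticulations has a unique tree-component with two leaves, hence a unique tree node $u$ with two non-reticulation children $w_1,w_2$; subdividing $(u,w_i)$ and the root edge and adding one arc gives the two preimages. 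No component graphs, no weights, no identity to prove.

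Your proposal, by contrast, pushes the claim through the master formula, correctly collapses the partition data, and reduces to the identity~(\ref{eq:star}). But then the real work begins, and you leave it undone: the ``canonically chosen reticulation'' is never specified, and you yourself flag that when the two parents of the contracted reticulation may or may not lie in the same component, the single factor $(d_{j^{*}}+1)(d_{j^{*}}+2)$ needs a careful case analysis. That analysis is the whole proof. If you do choose the reticulation adjacent to the root, you are in effect reconstructing the paper's bijection in heavier notation; if you choose otherwise, you will have to argue anew why the map is $2$-to-$1$. Either way, the formula machinery buys you nothing here --- the bijection is elementary and lives directly on the networks, and going through $\gamma_n(m)$ only obscures it.
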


\begin{proof}
  Let $N$ be a TCN on $[n]$ with $n-1$ reticulations and $\rho$ be its root. Note that it implies that each of its tree components has a single leaf, that is, it is reduced to a path ending in a leaf. From this we derive that  $\rho$ has  two grandchildren (in $N$) and that  one of the grandchildren  is a tree node, say $u$,  and the other is a reticulation, say $v$. If we remove from $N$ the arc entering $v$ from the child of $\rho$, both the child of $\rho$ and $v$ are of indegree $1$ and outdegree $1$. After elimination of $v$ and the child of $\rho$ by edge contraction we get a TCN $N'$ with $n-2$ reticulations. In this way we get a mapping $\psi: {\cal TC}_{n,n-1}\to {\cal TC}_{n,n-2}$.

  Let $N'$ be a TCN network on $[n]$ with $n-2$ reticulations and $\rho'$ be its root. Then, there is  a single tree component of $N'$ that contains  two leaves, and all the other ones are simply paths. Hence, there exists a single node $u$ in the forest $N'-{\cal R}(N')$ with two children $w_1$ and $w_2$, which are of indegree $1$ (and outdegree either $0$ or $1$) for each $i$. Let $w=w_i$ for $i=0, 1$.  Make a subdivision of the arc $(u,w)$ by
  introducing a node $v$ and a subdivision of the arc $(\rho', c(\rho'))$ by introducing a new node $p$ and then add the arc $(p,v)$. The node $p$ is the child of $\rho'$ in the obtained network, which is a TCN, and $v$ is a new reticulation. Hence we get a network $N$ with $n-1$ reticulations, and it is straightforward to check that $\psi(N)=N'$. Also, since there are exactly two choices for $w$, and they produce non-isomorphic networks, it follows that the mapping $\psi$ is exhaustive and each network with $n-2$ reticulations has two preimages. 
\qed\end{proof}

\section{Counting TCNs with Few Reticulations}
\label{section5}

\subsection{Relationships between network classes}
\label{section5.1}

\begin{proposition}
\label{1ret_case}
  Let  $n>2$. Then, $${\cal NN}_{n, 1}\subset {\cal TC}_{n, 1}={\cal GT}_{n, 1}={\cal GN}_{n, 1}={\cal RV}_{n, 1}={\cal RPN}_{n, 1}.$$
\end{proposition}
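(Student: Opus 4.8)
The plan is to exploit the fact that all six families are, by definition, subfamilies of ${\cal RPN}_{n,1}$, together with the two definitional inclusions ${\cal NN}_{n,1}\subseteq{\cal TC}_{n,1}$ and ${\cal GT}_{n,1}\subseteq{\cal GN}_{n,1}$ recorded in Section~\ref{sec2.2}. Consequently the four equalities will follow once I establish the single reverse containment ${\cal RPN}_{n,1}\subseteq{\cal C}_{n,1}$ for each ${\cal C}\in\{{\cal GT},{\cal RV},{\cal TC}\}$, and the strict inclusion will follow from exhibiting one member of ${\cal TC}_{n,1}\setminus{\cal NN}_{n,1}$. Throughout, the crucial structural observation is that in a network with a single reticulation $r$, every node other than $r$ has indegree at most $1$.

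First, ${\cal RPN}_{n,1}\subseteq{\cal GT}_{n,1}$ is exactly the remark in Section~\ref{sec2.2} that every RPN with one reticulation is a galled tree; combined with ${\cal GT}_{n,1}\subseteq{\cal GN}_{n,1}\subseteq{\cal RPN}_{n,1}$ this closes a cycle of containments forcing ${\cal GT}_{n,1}={\cal GN}_{n,1}={\cal RPN}_{n,1}$. For reticulation-visibility I would take $N\in{\cal RPN}_{n,1}$ with reticulation $r$ and child $c(r)$: since $r$ is the only reticulation, $c(r)$ has indegree $1$, so $r$ is its unique parent, and hence every directed path from the root to any leaf $\ell$ below $c(r)$ (such a leaf exists as $N$ is acyclic with all sinks being leaves) must pass through $c(r)$, and therefore through $r$. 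Thus $r$ is visible and $N\in{\cal RV}_{n,1}$.

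The technical heart is the tree-child containment. Here I would show that no non-leaf node can have all of its children being reticulations. The root has outdegree $1$ and, as $n>2$, its child is a tree node: it cannot be $r$ (the second parent of $r$ would then be a descendant of $r$, creating a directed cycle) and it cannot be a leaf. A tree node has outdegree $2$, and because $N$ has no parallel edges and $r$ is the only reticulation, at most one of its two children is $r$, so it has a tree-node-or-leaf child. Finally $r$ has outdegree $1$ and its child is distinct from $r$, hence a tree node or a leaf. Therefore every non-leaf node has a child that is a tree node or a leaf, so $N\in{\cal TC}_{n,1}$.

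For strictness I would, for each $n>2$, construct a tree-child network $N$ with one reticulation whose two parents $p_1,p_2$ of $r$ are comparable: let $p_1$ be the root's (tree-node) child with children $p_2$ and $r$, let $p_2$ have children $r$ and one leaf, and attach a phylogenetic tree on the remaining $n-1$ leaves below $c(r)$. Checking that $N$ is tree-child is routine, while the edge $(p_1,r)$ is a shortcut making $p_1$ an ancestor of $p_2$, so $N$ is not normal; together with ${\cal NN}_{n,1}\subseteq{\cal TC}_{n,1}$ this yields ${\cal NN}_{n,1}\subsetneq{\cal TC}_{n,1}$. The main obstacle I anticipate is precisely the tree-child step: it hinges on invoking both the \emph{no parallel edges} clause of the binary-RPN definition (to forbid a tree node from sending two edges to $r$) and the hypothesis $n>2$ (to guarantee the root's child is internal); the visibility and galled-tree steps are comparatively immediate, and the strictness step is a direct construction.
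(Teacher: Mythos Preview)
Your proposal is correct and follows essentially the same approach as the paper: the paper's entire proof consists of the single sentence ``These set proper containment and equations can be derived from the definitions of these classes,'' and you have simply unpacked those definitional verifications in detail (invoking the paper's remark in Section~\ref{sec2.2} that every RPN with one reticulation is a galled tree, and using the no-parallel-edges clause and the hypothesis $n>2$ for the tree-child step). Your explicit construction witnessing strictness is a welcome addition that the paper omits.
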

\begin{proof} 
These set proper containment and equations can be derived from the definitions of these classes. 
\qed\end{proof}

\begin{proposition} 
\label{2ret_case}
Let $n>2$.  
\begin{enumerate}
\item ${\cal NN}_{n, 2}\subset  {\cal TC}_{n, 2}$. 

\item ${\cal GT}_{n, 2}\subset  {\cal GN}_{n, 2} \subset  {\cal RV}_{n, 2} \subset {\cal TB}_{n, 2}$.

\item  ${\cal TC}_{n, 2} \cup {\cal GN}_{n, 2} \subset {\cal RV}_{n, 2}$.

\item  ${\cal TB}_{n, 2}={\cal RPN}_{n, 2}$.
\end{enumerate}
\end{proposition}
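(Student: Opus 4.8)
The plan is to treat the four parts according to how much work each inclusion and each strictness claim requires, isolating the single genuinely new ingredient. The inclusions split into three kinds. The inclusions ${\cal GT}_{n,2}\subseteq {\cal GN}_{n,2}$ and ${\cal NN}_{n,2}\subseteq {\cal TC}_{n,2}$ are immediate from the definitions: a galled tree satisfies the galled-network edge condition because node-disjointness of its galls forces every interior gall node to be a tree node (so every gall edge other than those entering the reticulation is a tree edge), and a normal network is by definition a tree-child network. The inclusions ${\cal TC}_{n,2}\subseteq {\cal RV}_{n,2}$ and ${\cal GN}_{n,2}\subseteq {\cal RV}_{n,2}$ each need a short visibility argument. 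For the former I would fix a reticulation $r$, use the tree-child property to obtain a tree-edge path from the child $c(r)$ down to a leaf $\ell$, and then show every root-to-$\ell$ path meets $r$: reading the tree path upward from $\ell$, each node has in-degree one, so any alternative path must re-enter through the same parents and ultimately through $r$, the unique parent of $c(r)$. For the latter I would use that in a galled network every interior gall node is a tree node of in-degree one, so the top $a_r$ of the gall dominates $r$; the gall structure then yields a leaf below $r$ all of whose root-paths pass through $r$ (this can alternatively be verified by the finite case analysis over the two-reticulation component-graph shapes of Section~\ref{Sect4_TC}). Since both classes embed in ${\cal RV}_{n,2}$, the inclusion ${\cal TC}_{n,2}\cup {\cal GN}_{n,2}\subseteq {\cal RV}_{n,2}$ of part (3) follows at once.

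The heart of the proposition is part (4), the equality ${\cal TB}_{n,2}={\cal RPN}_{n,2}$; since every tree-based network is an RPN, the content is that every RPN with exactly two reticulations is tree-based, and once this is secured the remaining inclusion ${\cal RV}_{n,2}\subseteq {\cal TB}_{n,2}$ of part (2) is free. I would prove part (4) through the deletion characterization of tree-basedness equivalent to the paper's insertion-based definition: a binary RPN is tree-based exactly when one can delete one incoming edge at each reticulation and, after suppressing the resulting degree-two nodes, obtain a phylogenetic tree on $X$. The only way such a choice can fail is to leave a tree node with out-degree zero, and this can happen only at a tree node both of whose children are reticulations. With two reticulations $r_1,r_2$ such a \emph{double-reticulation parent} must have children exactly $\{r_1,r_2\}$, so I would split into the cases of zero, one, or two such nodes. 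With none, any choice works; with one node $t$, keeping the edge $t\to r_1$ saves $t$; with two nodes $t,t'$ (which forces $r_1$ and $r_2$ to share both parents $\{t,t'\}$), the \emph{crossing} choice of keeping $t\to r_1$ and $t'\to r_2$ saves both. In every case a legal deletion exists, so the network is tree-based.

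It remains to upgrade the inclusions to the proper inclusions $\subset$ in the statement, which I would do by exhibiting, for each $n>2$, explicit two-reticulation separating networks. For ${\cal GT}_{n,2}\subsetneq {\cal GN}_{n,2}$ I would build a galled network whose two galls share their top and a common initial path before branching down to $r_1$ and $r_2$: every gall edge except those entering the reticulations is a tree edge, so it is a galled network, yet the galls are not node-disjoint, so it is not a galled tree. For ${\cal GN}_{n,2}\subsetneq {\cal RV}_{n,2}$ and ${\cal TC}_{n,2}\cup{\cal GN}_{n,2}\subsetneq {\cal RV}_{n,2}$ I would use a two-reticulation analogue of the reticulation-visible network in Figure~\ref{Fig_1} that is neither a galled network nor a tree-child network, and for ${\cal RV}_{n,2}\subsetneq {\cal TB}_{n,2}$ a two-reticulation RPN containing a non-visible reticulation; finally ${\cal NN}_{n,2}\subsetneq {\cal TC}_{n,2}$ is witnessed by a tree-child network with two comparable reticulation parents. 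I expect the main obstacle to be part (4): although the case analysis above is short, the delicate points are checking that the deletion characterization genuinely matches the paper's insertion-based definition of tree-based networks, and verifying that a successful deletion leaves a connected spanning tree on all of $X$ (no node is orphaned from above, since only reticulation in-edges are removed), rather than the routine bookkeeping of the separating examples.
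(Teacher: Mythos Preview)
Your treatment of parts (1)--(3) is more detailed than the paper's, which simply declares them ``straightforward'' and points to the networks in Figure~\ref{Fig_1} for the strictness witnesses. For part (4) the paper takes a different route: it observes that with two reticulations at most one of them can have the other as a parent, and then invokes a known tree-basedness criterion (Corollary~10.18 of \cite{Steel_book} or Theorem~1 of \cite{Zhang_16}) to conclude immediately. Your deletion argument is more self-contained and avoids the external citation, which is a genuine advantage.

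However, your case analysis for part (4) has a gap. You claim that the only way a deletion choice can fail is by leaving a \emph{tree node} with out-degree zero. This overlooks that a \emph{reticulation} can also be left with out-degree zero: if $r_1$ is a parent of $r_2$, then $r_2$ is the unique child of $r_1$, and deleting $(r_1,r_2)$ as one of $r_2$'s incoming edges strands $r_1$ as an unlabeled leaf. With only two reticulations this is easy to repair---mandate that the edge $(r_1,r_2)$ be retained whenever one reticulation is the parent of the other---but you must then revisit the double-reticulation-parent cases under this additional constraint. For instance, if a tree node $p$ has children $\{r_1,r_2\}$ and is simultaneously one of the parents of $r_1$, you are now forced to keep both $(r_1,r_2)$ and $(p,r_1)$, so the ``crossing'' choice is no longer free. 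The amended case split still goes through, but it is slightly longer than the three bullets you list; this is precisely the configuration that the paper's one-line appeal to the cited criterion is designed to sidestep.
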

\begin{proof} The first three relationships are straightforward. For example, the second one is deduced from 
the facts that (a) the galled network in Figure~\ref{Fig_1}D is not a gall tree, (b) the network in Figure~\ref{Fig_1}E is a reticulation-visible network but not a gall network and (c) the network in Figure~\ref{Fig_1}F is a tree-based network but not a reticulation-visible network. 

  As for the last one, let $N\in {\cal RPN}_{n, 2}$.  Since $N$ contains only two reticulations, only one reticulation can have the other as its parent, implying that
 $N$ is tree-based (see \cite[Corollary 10.18, page 260]{Steel_book} or  \cite[Theorem 1]{Zhang_16}).
\qed\end{proof}

Recall that 1-${\cal C}$ denotes the subset of one-component networks of $\cal C$ for a network class ${\cal C}$. 

\begin{proposition} 
\label{prop5.3}
Let $n\geq 2$ and $k\geq 1$ such that $n\geq k$.  
\begin{enumerate}
\item $\mbox{1-}{\cal NN}_{n, k} \subset  \mbox{1-}{\cal TC}_{n, k}$. 
\item $\mbox{1-}{\cal GT}_{n, k} \subset  \mbox{1-}{\cal TC}_{n, k}$.
\item (\cite{Zhang_Rathin_G18}) $\mbox{1-}{\cal TC}_{n, k} \subset  \mbox{1-}{\cal GN}_{n, k} =  \mbox{1-}{\cal RV}_{n, k} = \mbox{1-}{\cal TB}_{n, k} = \mbox{1-}{\cal RPN}_{n, k}$.
\end{enumerate}
\end{proposition}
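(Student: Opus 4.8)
The plan is to exploit the single structural feature shared by all one-component networks: the child of every reticulation is a leaf, so each reticulation sits at the bottom of the unique non-trivial tree-component $C_0$, which carries all the tree nodes. In particular both parents of a reticulation are tree nodes of $C_0$, since a parent that were a reticulation would have a non-leaf child, and the root, having outdegree $1$, cannot be such a parent. With this in hand, part (1) needs no real work: a normal network is a tree-child network by definition, and the one-component condition constrains only reticulation children, so it transfers verbatim; hence $\mbox{1-}{\cal NN}_{n,k}\subseteq\mbox{1-}{\cal TC}_{n,k}$. Properness I would obtain (for suitable $n,k$) from a one-component tree-child network in which some reticulation has two comparable parents.

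For part (2) the substance is that every galled tree is tree-child, which I would verify by a local inspection at each non-leaf node. Because the galls are node-disjoint, both parents of a reticulation $r$ lie on $r$'s own gall, every non-reticulation node of a gall is a tree node, and the root cannot lie on a gall since it has outdegree $1$. From this I would deduce that the child of a reticulation is never a reticulation (otherwise two galls would share a node) and that no node has two reticulation children (it would then lie on two distinct galls). Thus every non-leaf node keeps a tree-node-or-leaf child, so a galled tree is tree-child, and restricting to one-component networks yields $\mbox{1-}{\cal GT}_{n,k}\subseteq\mbox{1-}{\cal TC}_{n,k}$.

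The core is part (3). I would first establish the equality chain $\mbox{1-}{\cal GN}_{n,k}=\mbox{1-}{\cal RV}_{n,k}=\mbox{1-}{\cal TB}_{n,k}=\mbox{1-}{\cal RPN}_{n,k}$ by proving that each of $\mbox{1-}{\cal GN}$, $\mbox{1-}{\cal RV}$, $\mbox{1-}{\cal TB}$ already contains $\mbox{1-}{\cal RPN}$; the reverse inclusions are free, each being a subclass of RPN. Reticulation-visibility is immediate: if the child of $r$ is the leaf $\ell$, then $r$ is the only parent of $\ell$, so every root-to-$\ell$ path meets $r$. For tree-basedness I would delete one of the two reticulation edges entering each reticulation; every non-root node then has indegree $1$, so the result is a spanning arborescence whose leaves are exactly $[n]$, i.e. a subdivision of a phylogenetic tree on $[n]$, and re-inserting the deleted edges (each joining two subdivided tree edges) recovers the network and exhibits it as tree-based.

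The main obstacle is the galled-network inclusion $\mbox{1-}{\cal RPN}_{n,k}\subseteq\mbox{1-}{\cal GN}_{n,k}$. Given a reticulation $r$ with parents $p_1,p_2$ in the tree $C_0$, I would set $a_r=\mathrm{lca}(p_1,p_2)$ and check, by a short ancestor-versus-branching case analysis, that $a_r$ is a tree node (it cannot be the root, which has outdegree $1$), that the two tree paths from $a_r$ down to $p_1$ and to $p_2$ meet only at $a_r$, and that appending the edges $(p_1,r)$ and $(p_2,r)$ yields two edge-disjoint $a_r$-to-$r$ paths forming a gall $C_r$. Since every edge of $C_r$ other than $(p_1,r)$ and $(p_2,r)$ lies inside $C_0$, it is a tree edge, which is exactly the galled-network requirement; thus $\mbox{1-}{\cal RPN}_{n,k}\subseteq\mbox{1-}{\cal GN}_{n,k}$ and the chain of equalities closes. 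Finally $\mbox{1-}{\cal TC}_{n,k}\subseteq\mbox{1-}{\cal GN}_{n,k}$ follows at once from $\mbox{1-}{\cal TC}_{n,k}\subseteq\mbox{1-}{\cal RPN}_{n,k}=\mbox{1-}{\cal GN}_{n,k}$, with properness for $k\ge 2$ witnessed by a one-component RPN containing a tree node whose two children are reticulations, which is a galled network but not tree-child. The delicate points throughout are purely the gall verifications, so I expect this last paragraph to absorb most of the effort.
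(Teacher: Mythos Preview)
Your proposal is correct and, in fact, supplies what the paper omits: the paper's own proof of this proposition is the single sentence ``The relationships in the different items can be derived from the definition of these classes,'' with part~(3) attributed to the cited reference. Your argument is exactly the natural unpacking of those definitions---showing that both parents of each reticulation lie in the sole non-trivial tree-component $C_0$, then reading off reticulation-visibility, tree-basedness (via deletion of one incoming edge per reticulation), and the galled-network property (via $a_r=\mathrm{lca}_{C_0}(p_1,p_2)$)---so there is no methodological gap between you and the paper, only a gap in explicitness. Your caution about properness (``for suitable $n,k$'', ``for $k\ge 2$'') is warranted: for $k=1$ the paper itself records ${\cal TC}_{n,1}={\cal GT}_{n,1}={\cal GN}_{n,1}={\cal RPN}_{n,1}$, so parts~(2) and~(3) degenerate to equalities there, and the strict-inclusion symbol in the statement should be read loosely at that boundary.
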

\begin{proof}
  The relationships in the different items can be derived from the definition of these classes.
\qed\end{proof}

\subsection{Counting RPNs with one reticulation}
By Proposition~\ref{1ret_case},  the hierarchy of network classes beyond galled trees collapses  from five  into one. 
Recently, the author of this paper obtained simple formulas for the size of ${\cal NN}_{n, 1}$ and the size of ${\cal RPN}_{n, 1}$ (see \cite{Zhang_19}).  The formula for ${\cal RPN}_{n, 1}$ will be used in the next subsection.  For completeness,  the formula for the count of normal networks is also given below.

\begin{proposition} [\cite{Zhang_19}]
\label{oneret_case}
 Let $n\geq 3$. Then, 
 \begin{eqnarray}
  && |{\cal NN}_{n, 1}|=\frac{(n+2)(2n)!}{2^{n}n!}-3\cdot 2^{n-1}n!, \label{1ret_formulaN}\\
  && |{\cal RPN}_{n, 1}|= \frac{n(2n)!}{2^n n!}-2^{n-1} n!. \label{1ret_formulaArb}
\end{eqnarray}
\end{proposition}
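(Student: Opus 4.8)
The plan is to count one-reticulation networks by building each one by inserting a single reticulation arc into a phylogenetic tree on $[n]$, and to read off both formulas from the same construction. By Proposition~\ref{1ret_case} the five classes ${\cal TC}_{n,1}$, ${\cal GT}_{n,1}$, ${\cal GN}_{n,1}$, ${\cal RV}_{n,1}$, ${\cal RPN}_{n,1}$ coincide, so it suffices to count the RPNs in this common class; the normal networks are then exactly the sub-family in which the two parents of the reticulation are incomparable.

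First I would set up the forward construction. Fix a phylogenetic tree $T$ on $[n]$; by Proposition~\ref{prop2.1} it has $2n-1$ edges, and there are $t_n=\frac{(2n-2)!}{2^{n-1}(n-1)!}$ such trees. To insert a reticulation, choose an ordered pair of \emph{distinct} edges $(f,e)$, subdivide $e$ by a node $r$ and $f$ by a node $p_2$, and add the arc $(p_2,r)$; then $r$ becomes the reticulation, with parents $p_1$ (the tail of $e$) and $p_2$. One checks that the same-edge case $e=f$ always produces either a parallel arc or a directed cycle, and so is excluded, and that for $e\neq f$ the only obstruction is acyclicity, namely that $r$ must not be an ancestor of $p_2$ in $T$, i.e. $e$ must not lie strictly above $f$.

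Next I would fix the multiplicity. Given a one-reticulation network $N$ with reticulation $r$ and parents $p_1,p_2$, deleting either incoming arc of $r$ and suppressing the two resulting degree-two nodes recovers a tree together with an insertion; since the two sides of the gall carry disjoint label sets, the two preimages are always distinct, so the map (tree, insertion) $\to N$ is exactly two-to-one. Hence $|{\cal RPN}_{n,1}|=\frac12\sum_T a(T)$, where $a(T)$ is the number of admissible insertions on $T$, and analogously for ${\cal NN}_{n,1}$ once admissibility is strengthened by the normality condition. For the RPN count the admissible insertions are the ordered distinct pairs $(f,e)$ for which $e$ is not strictly above $f$, so $a(T)=(2n-1)(2n-2)-A(T)$, where $A(T)$ is the number of ancestor--descendant edge pairs of $T$. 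For the NN count one instead needs $p_1$ (the tail of $e$) and $p_2$ (the node on $f$) to be incomparable, i.e. $f$ must lie neither above $p_1$ nor inside the subtree below $p_1$; this is strictly stronger than ``$e,f$ incomparable'' (sibling edges, for instance, are incomparable but yield comparable parents).

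The main obstacle is the evaluation of the resulting shape-dependent sums over all trees. Concretely, I would need closed forms for $\sum_T A(T)$ and for the analogous sum counting, over all edges $e$ of all trees, the edges $f$ whose subdivision point is incomparable to the tail of $e$; by linearity these reduce to summing depth and subtree-size statistics of edges over all binary phylogenetic trees, which admit standard closed-form evaluations. Carrying these out should collapse the expressions to $\frac{n(2n)!}{2^{n}n!}-2^{n-1}n!$ and $\frac{(n+2)(2n)!}{2^{n}n!}-3\cdot 2^{n-1}n!$ respectively; I have verified both against the values for small $n$ (e.g. $|{\cal RPN}_{3,1}|=21$, $|{\cal NN}_{3,1}|=3$, $|{\cal RPN}_{4,1}|=228$), where my per-tree insertion counts reproduce the table entries after dividing by two. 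As an independent cross-check on the first formula, since ${\cal RPN}_{n,1}={\cal TC}_{n,1}$ one may instead specialize the tree-child counting formula Eqn.~(\ref{tcn_formula}) to $m=1$: the only graph in ${\cal D}_2$ is the double arc $C_0\Rightarrow C_1$, giving $|{\cal RPN}_{n,1}|=\frac{1}{2^{n}}\sum_{b=1}^{n-1}{n\choose b}\frac{2(2b)!}{(b-1)!}\cdot\frac{(2n-2b-2)!}{(n-b-1)!}$, whose closed-form evaluation is the same binomial-sum obstacle in a different guise.
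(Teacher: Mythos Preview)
The paper does not actually prove this proposition: it is stated with the citation \cite{Zhang_19} and no proof is given in the text (it is imported only to be used later, e.g.\ in the proof of Theorem~\ref{theorem_tworet}). So there is no in-paper argument to compare your proposal against.

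Evaluated on its own, your plan is sound and essentially complete for Eqn.~(\ref{1ret_formulaArb}). The two-to-one ``delete one reticulation arc and suppress'' reduction is correct (leaf-labelled binary networks have no nontrivial automorphisms, so the two preimages are always distinct), and the admissibility condition ``$e$ not strictly above $f$'' is exactly acyclicity; in particular the root edge is automatically excluded as a choice for $e$. Your reduction $|{\cal RPN}_{n,1}|=\tfrac12\bigl[t_n(2n-1)(2n-2)-\sum_T A(T)\bigr]$ is right, and the only missing ingredient is the tree statistic. Concretely, $A(T)=2\sum_{\ell}\mathrm{depth}_T(\ell)-2(2n-1)$, so $\sum_T A(T)=2nD_n-2(2n-1)t_n$ with $D_n:=\sum_T\mathrm{depth}_T(1)$. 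The recursion obtained by attaching leaf $n$ to an edge of a tree on $[n-1]$ gives $D_n=(2n-2)D_{n-1}$ and $D_2=2$, hence $D_n=2^{n-1}(n-1)!$; substituting yields $|{\cal RPN}_{n,1}|=n\bigl[(2n-1)t_n-D_n\bigr]=\frac{n(2n)!}{2^n n!}-2^{n-1}n!$. So your ``main obstacle'' for the second formula is a one-line recursion, not a genuine gap. Your cross-check via Eqn.~(\ref{tcn_formula}) with $m=1$ is also correct.

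For Eqn.~(\ref{1ret_formulaN}) you have correctly identified that the normality condition is \emph{not} ``$e,f$ incomparable'' but rather ``the subdivision point $p_2$ on $f$ is incomparable to $p_1=\mathrm{tail}(e)$'', which forbids $f$ from lying in the entire subtree below $\mathrm{tail}(e)$ (both below $e$ and in the sibling subtree) as well as strictly above $e$. That analysis is right; the remaining computation (summing, over all edges $e$ of all trees, the number of edges outside the subtree rooted at $\mathrm{tail}(e)$ and not on the root path to $\mathrm{tail}(e)$) is of the same flavour as $D_n$ but requires one more summed statistic. You have not carried it out, so strictly speaking the normal case remains a sketch; but the method is viable and the only thing missing is bookkeeping, not an idea.
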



\subsection{Counting RPNs with two reticulations}
\label{section5.3}

Proposition~\ref{2ret_case} implies that all the six network classes defined in Section~\ref{sec2.2} are distinct. This raises different counting problems for 
RPNs with two reticulations. In the resit of  this section, we will answer three of them. 

\begin{lemma} 
\label{lemma55}
For $n\geq 2$,
\begin{eqnarray}
  && \sum^{n}_{k=1}{2k\choose k} \frac{k}{2^{2k}}=\frac{(2n+1)!}{3\cdot 2^{2n}n!(n-1)!}, \label{sum1}\\
  && \sum^{n-1}_{k=1}{2k\choose k}{2n-2k\choose n-k}k(n-k)=n(n-1)2^{2n-3}, \label{identity1}
\end{eqnarray}
\end{lemma}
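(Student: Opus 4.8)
The plan is to prove the two identities separately, each by a standard technique: a telescoping argument for~(\ref{sum1}) and the generating function of the central binomial coefficients for~(\ref{identity1}).

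For~(\ref{sum1}), write $a_k = {2k \choose k}\frac{k}{2^{2k}}$ and set $b_k = \frac{(2k+1)!}{3\cdot 2^{2k}\,k!\,(k-1)!}$ for $k\geq 1$, so that the right-hand side is exactly $b_n$. First I would check the base case $b_1 = \frac12 = a_1$. Then, for $k\geq 2$, I would compute the ratio $b_k/b_{k-1} = \frac{2k+1}{2(k-1)}$ by cancelling factorials, whence $b_k - b_{k-1} = \frac{3}{2(k-1)}\,b_{k-1}$; after simplification this collapses to $\frac{2(2k-1)!}{2^{2k}((k-1)!)^2}$, which is precisely $a_k$. Summing the telescope then gives $\sum_{k=1}^n a_k = a_1 + \sum_{k=2}^n (b_k - b_{k-1}) = b_1 + (b_n - b_1) = b_n$, establishing~(\ref{sum1}). (Equivalently, the same ratio identity yields a one-line induction on $n$.)

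For~(\ref{identity1}), I would start from $\sum_{k\geq 0}{2k \choose k}x^k = (1-4x)^{-1/2}$. Applying the operator $x\frac{d}{dx}$ produces $F(x) := \sum_{k\geq 0} k{2k \choose k}x^k = 2x(1-4x)^{-3/2}$, so that $F(x)^2 = 4x^2(1-4x)^{-3}$. The left-hand side of~(\ref{identity1}) is exactly the coefficient of $x^n$ in $F(x)^2$: the Cauchy product is $\sum_{k=0}^n k(n-k){2k\choose k}{2n-2k\choose n-k}$, and the summands at $k=0$ and $k=n$ vanish because of the factors $k$ and $n-k$, so the convolution from $0$ to $n$ coincides with the stated sum from $1$ to $n-1$. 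Using $(1-4x)^{-3} = \sum_{m\geq 0}{m+2 \choose 2}4^m x^m$, I would extract $[x^n]F(x)^2 = 4\,[x^{n-2}](1-4x)^{-3} = 4{n \choose 2}4^{n-2} = 2n(n-1)4^{n-2} = n(n-1)2^{2n-3}$, which is the right-hand side.

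Neither identity presents a genuine obstacle; the technique in each case is routine. The only points demanding care are the bookkeeping of the degenerate index in~(\ref{sum1}) (the closed form $b_k$ is undefined at $k=0$, so the $k=1$ term must be split off and absorbed into the base case rather than into the telescope), and, for~(\ref{identity1}), the clean justification that the $x^n$-coefficient of the squared generating function equals the stated convolution and that its two boundary terms drop out. Everything else reduces to factorial cancellation and coefficient extraction.
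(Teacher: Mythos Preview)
Your proposal is correct and follows essentially the same approach as the paper: the telescoping argument for~(\ref{sum1}) is equivalent to the paper's one-line induction (as you yourself note), and your generating-function proof of~(\ref{identity1}) via $xf'(x)=2x(1-4x)^{-3/2}$ and coefficient extraction from $4x^2(1-4x)^{-3}$ is exactly the paper's argument.
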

\begin{proof}
 Eqn.~(\ref{sum1}) is trivial for $n=1$. Assuming that  it is true for $n-1$, we then have:
\begin{eqnarray*}
  \sum^{n}_{k=1}{2k\choose k} \frac{k}{2^{2k}}= \frac{(2(n-1)+1)!}{3\cdot 2^{2(n-1)}(n-1)!(n-2)!} +{2n\choose n} \frac{n}{2^{2n}} = \frac{(2n+1)!}{3\cdot 2^{2n}n!(n-1)!}.
\end{eqnarray*}
This proves Eqn.~(\ref{sum1}).

Let $f(x)=\sum_{n\geq 0} {2n\choose n}x^n$. Then, $f(x)=(1-4x)^{-1/2}$ (see \cite[page 52]{Stanley_book}). Multiplying  $f'(x)$ by $x$, we have:
$$xf'(x)=\sum_{n\geq 0} {2n\choose n} n x^{n},$$
and 
$$(xf'(x))^2 =\sum_{n\geq 0} \left(\sum^n_{k=0}{2k\choose k}{2n-2k\choose n-k}k(n-k)\right) x^n.$$
On the other hand, 
$$(xf'(x))^2 =x^2 \left( 2(1-4x)^{-3/2}\right)^2 =  \sum_{n\geq 0}{2+n\choose 2}4^{n+1}x^{n+2}.$$
Identifying the coefficient of $x^n$ in these two forms, we obtain Eqn.~(\ref{identity1}).
\qed\end{proof}

\begin{theorem} 
\label{theorem_tworet}
Let $n>2$. The number of tree-child networks with two reticulations on $n$ taxa is
{\small
\begin{eqnarray*}
 |{\cal TC}_{n, 2}| &=& \frac{n!}{2^{n}} \sum^{n-2}_{j=1}{2j\choose j}{2n-2j\choose n-j} \frac{j(2j+1) (2n-j-1)}{2n-2j-1} \\
  & &  + n(n-1)n!2^{n-3} -\frac{(2n-1)!n}{3\cdot 2^{n-1}(n-2)!}.
\end{eqnarray*}
}
\end{theorem}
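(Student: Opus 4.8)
The plan is to specialize the component-graph counting formula Eqn.~(\ref{tcn_formula}) to the case $m=2$ and then collapse the resulting three-block sums into single-index convolutions that the identities of Lemma~\ref{lemma55} evaluate in closed form. A tree-child network with two reticulations has exactly three tree-components $C_0,C_1,C_2$ (with $C_0$ rooted at the network root), so its component graph is a rooted labelled DAG on three nodes in which $C_1$ and $C_2$ each have indegree $2$ and double edges are allowed. First I would enumerate these DAGs: up to labelling there are exactly three topologies, namely (G1) $C_0$ joined to each of $C_1,C_2$ by a double edge (both reticulations inner to $C_0$, with the two leaf-components interchangeable), (G2) the directed path $C_0 \Rightarrow C_1 \Rightarrow C_2$ of double edges (nested inner reticulations), and (G3) the ``diamond'' formed by a double edge $C_0\Rightarrow C_1$ together with single edges $C_0\to C_2$ and $C_1\to C_2$. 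For each topology I would read off the out-degree $d_i$ and the number of children $c_i$ of every node.

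Next I would apply Eqn.~(\ref{tcn_formula}) to each topology separately, writing the block sizes as a composition $b_0+b_1+b_2=n$ and the number of label assignments as the multinomial coefficient $\binom{n}{b_0,b_1,b_2}$; for the symmetric topology (G1) I divide by the order $2$ of the automorphism group swapping its two leaf-components. This yields, for each topology, a sum of products of factors $\frac{(2b_i+d_i-2)!}{(b_i-1)!}$ against the multinomial coefficient. The key simplification is that each such factor, combined with the $1/b_i!$ from the multinomial, becomes a central binomial coefficient times an elementary rational factor, e.g. $\frac{(2b)!}{b!(b-1)!}=b\binom{2b}{b}$ and $\frac{(2b-2)!}{b!(b-1)!}=\frac{1}{b}\binom{2b-2}{b-1}$. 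After this rewriting I would carry out the summation over the ``leaf'' block $b_2$ (and, for (G1), over one of the two symmetric blocks) by a Vandermonde-type convolution, reducing every three-block sum to a single sum over a two-part split $j,\,n-j$ of $n$ whose summand is a product $\binom{2j}{j}\binom{2n-2j}{n-j}$ times a rational function of $j$ and $n$.

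Finally I would recognise the two single-index sums coming from the simpler topologies as precisely the left-hand sides of Eqns.~(\ref{sum1}) and (\ref{identity1}): the convolution identity (\ref{identity1}) collapses one of them into the closed term $n(n-1)\,n!\,2^{n-3}$, while (\ref{sum1}) governs the inner-reticulation contribution and, after subtracting the degenerate configurations in which the two inner reticulations would merge (an inclusion--exclusion correction of the same flavour as in Proposition~\ref{count_1cGalledTree}, which is where the factor $\tfrac13$ and the minus sign originate), produces $-\frac{(2n-1)!\,n}{3\cdot 2^{n-1}(n-2)!}$. The contribution that does not telescope survives as the leading sum $\frac{n!}{2^n}\sum_{j=1}^{n-2}\binom{2j}{j}\binom{2n-2j}{n-j}\frac{j(2j+1)(2n-j-1)}{2n-2j-1}$. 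The main obstacle is the middle step: converting the three-index factorial sums into the exact two-index central-binomial convolutions matching Lemma~\ref{lemma55}, in particular pushing the rational factor $\frac{j(2j+1)(2n-j-1)}{2n-2j-1}$ through the Vandermonde summation and correctly handling the small or degenerate block sizes (and the automorphism of (G1)), which is exactly what generates the subtracted correction term.
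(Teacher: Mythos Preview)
Your high-level plan---split by the three component-graph topologies on three nodes and then invoke Lemma~\ref{lemma55}---is exactly the paper's strategy, and you have identified the three topologies correctly. But your assignment of which topology produces which piece of the final formula is wrong, and the ``inclusion--exclusion for merging reticulations'' story is not what is happening. In the paper, \emph{both} closed-form terms $n(n-1)n!\,2^{n-3}$ and $-\frac{(2n-1)!\,n}{3\cdot 2^{n-1}(n-2)!}$ come from the path topology alone (your G2, the paper's $G_5$), and \emph{both} identities of Lemma~\ref{lemma55} are applied there. The leading sum is what the star and the diamond (your G1 and G3, the paper's $G_3$ and $G_4$) contribute \emph{together}.

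The step you are missing for the path is this: the two lower components $C_1,C_2$ together constitute a tree-child network with one reticulation on $n-j$ taxa, and Proposition~\ref{oneret_case} already gives that count as $\frac{(n-j)(2n-2j)!}{2^{n-j}(n-j)!}-2^{n-j-1}(n-j)!$. Multiplying by the top one-component count $\frac{(2j)!}{2^j(j-1)!}$ and summing over $j$ splits naturally into a sum matching Eqn.~(\ref{identity1}) and one matching Eqn.~(\ref{sum1}); the factor $\tfrac13$ and the minus sign are inherited directly from Eqn.~(\ref{sum1}), not from any degeneracy correction. For the other two topologies the paper does not specialize Eqn.~(\ref{tcn_formula}) at all but argues directly: for the star it observes that an unordered pair of rooted phylogenetic trees on a combined leaf set of size $n-j$ is the same thing as a single phylogenetic tree on $n-j$ leaves (which absorbs your automorphism division), and for the diamond it treats the two bottom components as one tree on $n-j$ leaves with one extra edge entering from the top component. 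The two resulting single-index sums $A_3$ and $A_4$ then add term-by-term to give the displayed leading sum; no Vandermonde-type convolution is needed anywhere.
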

\begin{proof}
There are three different component graphs (the third to fifth graphs in Figure~\ref{Fig1_Appendix}) for TCNs with two reticulations, called $G_3, G_4$ and $G_5$. 
Let $A_i$ be the number of TCNs having $G_i$ as their component graph for $i=3, 4, 5$. 

Consider a TCN such that its component graph is $G_3$.  The structure of $G_3$ suggests that the top tree-component of the TCN  is a one-component TCN with a fixed reticulation, whereas  the bottom two tree-components are both a phylogenetic tree with the leaves contained in these components. If the top tree-component contains $j$ network leaves, the bottom tree-components correspond to a forest of two phylogenetic trees on $n-j$ taxa and hence there are  $t_{n-j}$ possibilities in total,  where $t_{n-j}$ is  the number of all the phylogenetic trees with $n-j$ taxa. Applying the same argument as in the proof of Theorem 4.7 to the top tree-component,  we obtain:
\begin{eqnarray*}
 A_3
&=& \sum^{n-2}_{j=1} {n\choose j} \frac{(2j+2)!}{2^{j+1}(j-1)!} t_{n-j}
= \frac{n!}{2^n} \sum^{n-2}_{j=1}  \frac{(2j+2)!}{j!(j-1)!}\frac{(2n-2j-2)!}{(n-j-1)!(n-j)!}
\end{eqnarray*}

Consider a TCN such that $G_4$ is its component graph. We have then that the bottom tree-component of the TCN is a phylogenetic tree 
with at least one leaf, the middle tree-component is a phylogenetic tree with $k+1$ leave if it contains $k$ leaves. Thus, if  the top tree-component contains $j$ leaves,  the bottom two tree-components is essentially a phylogenetic tree with $n-j$ leaves with an edge from the top tree-component being inserted into  one of $2n-2j-2$ tree edges that are not adjacent to the top reticulation node.  Thus, 
\begin{eqnarray*}
     A_4
 &=&\sum^{n-2}_{j=1} {n \choose j} \frac{(2j+1)!}{2^{j}(j-1)!}\times \frac{(2n-2j-2)!}{2^{n-j-1}(n-j-1)!}\times (2n-2j-2)\\
&=& \frac{n!}{2^{n-1}}\sum^{n-2}_{j=1} \frac{(2j+1)!}{j!(j-1)!}\frac{2(n-j-1)(2n-2j-2)!}{(n-j-1)!(n-j)!},
\end{eqnarray*}
where in the first row, the first term is the number of possibilities for the top tree-component, the second term is   the number of possibilities for the tree structure contained in the bottom two tree-components and  the third term is the number of 
possibilities of forming the lower reticulation by inserting the fixed leave of the top tree-component into an edge in the bottom two tree-components. 

Consider a TCN such that its component graph is $G_5$. The bottom two tree components form  a TCN with one reticulation, whereas the top component is a one-component TCN with a fixed reticulation. Thus, by Proposition~\ref{oneret_case} and Lemma~\ref{lemma55},   we have:
\begin{eqnarray*}
     A_5
&=&  \sum^{n-2}_{j=1} {n \choose j}  \frac{(2j)!}{2^{j}(j-1)!} \left( \frac{(n-j)(2n-2j)!}{2^{n-j}(n-j)!} -2^{n-j-1}(n-j)!\right)\\
&=& \sum^{n-2}_{j=1} {n \choose j}  \frac{(2j)!}{2^{j}(j-1)!}\frac{(2n-2j)!}{2^{n-j}(n-j-1)!} -
 \sum^{n-2}_{j=1} {n \choose j}  \frac{(2j)!}{2^{j}(j-1)!} 2^{n-j-1}(n-j)!\\
&=&\frac{n!}{2^n} \sum^{n-2}_{j=1} {2j\choose j}{2n-2j\choose n-j}j(n-j) -
 n! 2^{n-1} \sum^{n-2}_{j=1} \frac{(2j)!}{4^j j!(j-1)!}\\
&=&\frac{n!}{2^n} \sum^{n-2}_{j=1} {2j\choose j} {2n-2j \choose n-j} j(n-j) -
 n! 2^{n-1} \sum^{n-2}_{j=1}{2j\choose j} \frac{j}{4^j}\\
&=& n(n-1)n!2^{n-3}-\frac{(2n-1)!n}{3\cdot 2^{n-1}(n-2)!}.\\
\end{eqnarray*}

Summing the above three equations, we obtain the total number of TCNs with two reticulations on $n$ taxa:
\begin{eqnarray*}
   |{\cal TC}_{n, 2}|   &=&A_3+A_4+A_5\\
&=&  \frac{n!}{2^n} \sum^{n-2}_{j=1}  \frac{(2j+2)!}{j!(j-1)!}\frac{(n-j)(2n-2j-2)!}{(n-j)!(n-j)!}\\
 & & +\frac{n!}{2^{n-1}}\sum^{n-2}_{j=1} \frac{(2j+1)!}{j!(j-1)!}\frac{2(n-j-1)(n-j)(2n-2j-2)!}{(n-j)!(n-j)!} +A_5\\
&=&  \frac{n!}{2^{n}} \sum^{n-2}_{j=1}{2j\choose j}{2n-2j\choose n-j} \frac{j(2j+1) (2n-j-1)}{2n-2j-1} +A_5. 
\end{eqnarray*}
The proposition is proved.
\qed\end{proof}

Let $g_{n, k}$ denote the number of one-component networks with $k$ reticulations on $[n]$ in which the children of the $k$ reticulations are Leaf $1, 2, \cdots k$, $k\geq 0$.  Notice that  $g_{n, 0}$ is simply the number of phylogenetic trees on $[n]$.
By  Eqn. (4), (5) and (10) in \cite{Zhang_Rathin_G18}, we have the following formula: 
\begin{eqnarray}
    g_{n, 2}&= &ng_{n, 1} + \frac{1}{2} \left(g_{n-1, 0} + g_{n, 0}\right)=\frac{(n-1)^2(2n-1)(2n-4)!}{2^{n-2}(n-2)!}, \label{2ret_formula}
\end{eqnarray}
where we assume $0!=1$.

\begin{proposition} 
\label{prop5.6}
Let $n\geq 2$ and let ${\cal GN}_{n, 2}$ be the set of galled networks with two reticulations on $n$ taxa. Then,
  \begin{eqnarray*}
    |{\cal GN}_{n, 2}|
&=&  \frac{n!}{2^{n-1}}\sum^{n-2}_{j=0} {2j \choose j}{2n-2j \choose n-j}\frac{ (j+1)^2 (2j+3)}{(n-j)(2n-2j-1)} \\
& &+ n(n-1)n!2^{n-3} -\frac{(2n-1)!n}{3\cdot 2^{n-1}(n-2)!} \\
\end{eqnarray*}
\end{proposition}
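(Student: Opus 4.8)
The plan is to stratify ${\cal GN}_{n, 2}$ by the component graph and to reuse as much of the tree-child computation of Theorem~\ref{theorem_tworet} as possible. Recall that every reticulation of a galled network is inner, i.e.\ its two parents lie in one tree-component; in the component graph this forces both non-root nodes to receive a \emph{double} edge. With two reticulations there are three tree-components, and among the three shapes $G_3, G_4, G_5$ appearing in Theorem~\ref{theorem_tworet} only $G_3$ (two double edges issuing from the root component $C_0$) and $G_5$ (the nested shape $C_0\Rightarrow C_1\Rightarrow C_2$ with double edges) have this property. The mixed shape $G_4$ is excluded at once, since its lower reticulation has one parent in $C_0$ and one in $C_1$ and is therefore not inner. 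Thus $|{\cal GN}_{n, 2}|$ equals the number of galled networks of shape $G_3$ plus the number of shape $G_5$.

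I would dispose of the $G_5$ class first. In that shape each tree-component carries at most one reticulation beneath it, so no tree node can be the common parent of two reticulations; hence every galled network of shape $G_5$ is already tree-child, and conversely any tree-child network of shape $G_5$ has both reticulations inner and so is galled. Its count therefore coincides with the contribution $A_5$ isolated in the proof of Theorem~\ref{theorem_tworet}, namely $n(n-1)n!2^{n-3}-\frac{(2n-1)!n}{3\cdot 2^{n-1}(n-2)!}$, which is exactly the last two summands of the claimed formula.

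The substance is the $G_3$ class. Here I would peel off the root component $C_0$ together with its two inner reticulations, regarding the two reticulation-children as labelled leaves; this ``top'' is a one-component network on $|L_0|+2$ leaves with two reticulations of prescribed children, which by Proposition~\ref{prop5.3} is one-component galled and is counted by $g_{|L_0|+2,\,2}$ of~(\ref{2ret_formula}). The two reticulation-children are then expanded into pendant phylogenetic trees carrying the remaining $n-|L_0|$ taxa. Writing $j=|L_0|$, summing over the ${n\choose j}$ choices of the top taxa and over $j$, and substituting the closed form of $g_{j+2,2}$, produces the first displayed sum. The feature that makes galled networks strictly larger than tree-child networks in this shape---and the reason the sum must begin at $j=0$ rather than at $j=1$---is the leafless top component: when $L_0=\emptyset$ some tree node of $C_0$ is forced to have both of its children be reticulations, a configuration legal for galled networks but forbidden in ${\cal TC}_{n, 2}$.

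The step I expect to be the main obstacle is precisely this bookkeeping in the $G_3$ case. Because $g_{j+2,2}$ distinguishes the two reticulation-children by a label while the two reticulations of the finished network are interchangeable, one must factor out the reticulation-swapping symmetry, which is what converts the ordered grafting implicit in the labelled top into the correct unordered weight for the pair of pendants; pinning this weight down uniformly over all $j$, and in particular correctly weighting the boundary term $j=0$ that has no tree-child analogue, is delicate and is where an off-by-a-symmetry-factor slip is most likely. Once the top and pendant counts are combined with the right weight, the remaining elementary binomial manipulations---aided by the identities of Lemma~\ref{lemma55} already invoked for $A_5$---cast the $G_3$ contribution in the stated form, and adding the $G_3$ and $G_5$ totals completes the proof.
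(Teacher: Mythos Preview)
Your plan is exactly the paper's: split by component graph, discard $G_4$ because a galled network has only inner reticulations, identify the $G_5$ contribution with $A_5$ from Theorem~\ref{theorem_tworet}, and for $G_3$ count the top as a one-component network with two prescribed reticulation children via $g_{j+2,2}$ while the two pendants carry the remaining $n-j$ leaves, summing from $j=0$.

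The symmetry issue you flag as the main obstacle is dissolved in the paper by a single observation: the unordered pair of pendant subtrees is ``equivalent to a phylogenetic tree with them as left and right subtrees,'' so it is counted by $g_{n-j,0}$. Concretely, the map $(\text{labelled top},\ (S_1,S_2))\mapsto N$ that grafts $S_i$ at the reticulation child labelled $i$ is exactly $2$-to-$1$, and the number of ordered pairs $(S_1,S_2)$ is exactly $2\,g_{n-j,0}$; the two factors of $2$ cancel and the product ${n\choose j}\,g_{j+2,2}\,g_{n-j,0}$ is already the correct $G_3$ count, with no residual division and no special treatment at $j=0$. One small inaccuracy in your narrative: the $j=0$ term is not the \emph{only} place where galled exceeds tree-child in shape $G_3$; for every $j\ge 1$ one has $g_{j+2,2}$ strictly larger than the corresponding one-component tree-child count $\frac{(2j+2)!}{2^{j+1}(j-1)!}$, since even a non-empty top component may contain a tree node both of whose children are reticulations.
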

\begin{proof} Since the component graph of a galled network with two reticulations is a tree with three nodes. Thus,  the component tree is either $G_3$ or $G_5$ in Figure~\ref{Fig1_Appendix}. However, as we remarks in Section 4.2,  unlike a TCN, 
a component may not contain any leaf if it is not at the bottom for a galled network.  

For a galled network with $G_3$ as its component graph,  the top tree-component is a 1-component network 
with two distinguished reticulations that may or may not contain network leaves. Additionally,  each of its two bottom components is a phylogenetic tree with at least one leaf, equivalent to a phylogenetic tree with them as left and right subtrees. Therefore, by Eqn~(\ref{2ret_formula}), the number of galled networks for this case is:
\begin{eqnarray*}
   B_3&=& \sum^{n-2}_{j=0} {n\choose j} g_{j+2, 2}\cdot g_{n-j, 0}\\
        &=& \sum^{n-2}_{j=0} {n\choose j} \frac{(j+1)^2(2j+3)(2j)!}{2^{j}(j)!}\frac{(2n-2j-2)!}{2^{n-j-1}(n-j-1)!}\\ 
   &=& \frac{n!}{2^{n-1}}\sum^{n-2}_{j=0} {2j \choose j}{2(n-j-1)\choose (n-j-1)}\frac{ (j+1)^2 (2j+3)}{(n-j)}\\
   &=& \frac{n!}{2^{n}}\sum^{n-2}_{j=0} {2j \choose j}{2n-2j\choose n-j}\frac{ (j+1)^2 (2j+3)}{(2n-2j-1)}.
\end{eqnarray*}

Consider a galled network. If its component graph is $G_5$,  the top tree-component is  then a 1-galled network with one distinguished reticulation  that contains at least  one network leaf; and the bottom tree-components form a galled network with one reticulation.  Since every TCN is a galled network if it contains only one reticulation (Proposition~\ref{1ret_case}), 
the number of galled networks with two reticulations that have $G_5$ as the component graph is equal to $A_5$ that was calculated in  the proof of Theorem~\ref{theorem_tworet}.

Summing $B_3$ and $A_5$, we obtain the formula. 
\qed\end{proof}

\begin{proposition} 
Let $n\geq 3$ and ${\cal GT}_{n, 2}$ be the set of galled trees with two reticulations on $n$ taxa. Then, 
{\small
\begin{eqnarray*}
\label{2ret_galls}
  |{\cal GT}_{n, 2}|=\frac{n!}{ 3\cdot 2^n}\sum^{n-2}_{j=1} {2j\choose j}{2n-2j\choose n-j} 
 \frac{j(j+1)(2j+1)}{(2n-2j-1)} + n(n-1)n!2^{n-3}-\frac{(2n-1)!n}{3\cdot 2^{n-1}(n-2)!}.
\end{eqnarray*}
}
\end{proposition}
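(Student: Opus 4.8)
The plan is to follow exactly the same component-graph decomposition used in the proofs of Theorem~\ref{theorem_tworet} and Proposition~\ref{prop5.6}. A galled tree with two reticulations on $[n]$ has a component graph that is a tree on three nodes, so it must be either $G_3$ or $G_5$ from Figure~\ref{Fig1_Appendix}. I will count the two cases separately, letting $D_3$ and $D_5$ denote the number of galled trees with component graph $G_3$ and $G_5$ respectively, and then sum them.

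First I would handle the $G_5$ case. In a galled tree, each reticulation must lie on its own node-disjoint gall, but a galled tree is in particular a one-component-per-reticulation tree-child network, and crucially for a single reticulation every tree-child network is already a galled tree by Proposition~\ref{1ret_case}. Since the structure of $G_5$ forces the top tree-component to be a one-component galled tree with one distinguished reticulation (containing at least one leaf) and the bottom two components to form a galled tree with one reticulation, exactly the same enumeration as in the $G_5$ subcase of Proposition~\ref{prop5.6} applies. Hence $D_5=A_5=n(n-1)n!2^{n-3}-\frac{(2n-1)!n}{3\cdot 2^{n-1}(n-2)!}$, and no new work is needed here.

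The new content is the $G_3$ case. Here the top tree-component must by itself realize both reticulations as two node-disjoint galls with their reticulation children being the roots of the two bottom components; the two bottom components are each a phylogenetic tree. Unlike the galled-network count in Proposition~\ref{prop5.6}, the galled-tree condition forces each gall in the top component to be a genuine gall (node-disjoint from the other, with the two distinguished leaves as reticulation bottoms), so the top component is a \emph{one-component galled tree} with two reticulations rather than an arbitrary one-component network. I would therefore use Proposition~\ref{count_1cGalledTree}, specifically $|{\cal OGT}_{j+2,2}|=\frac{(2j+2)!}{3\cdot 2^{j+1}(j-1)!}$ (Eqn.~(\ref{2ret_galledtree}) with $n$ replaced by $j+2$), for the top component carrying $j$ genuine network leaves together with the two fixed reticulation-leaves, and multiply by the tree count $t_{n-j}$ for the two bottom components read as one phylogenetic tree. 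Summing over ${n\choose j}$ choices of which leaves sit in the top component gives
\begin{eqnarray*}
 D_3&=&\sum^{n-2}_{j=1}{n\choose j}|{\cal OGT}_{j+2,2}|\,t_{n-j}\\
    &=&\sum^{n-2}_{j=1}{n\choose j}\frac{(2j+2)!}{3\cdot 2^{j+1}(j-1)!}\cdot\frac{(2n-2j-2)!}{2^{n-j-1}(n-j-1)!},
\end{eqnarray*}
which after rewriting binomials collapses to $\frac{n!}{3\cdot 2^n}\sum^{n-2}_{j=1}{2j\choose j}{2n-2j\choose n-j}\frac{j(j+1)(2j+1)}{2n-2j-1}$, matching the first term of the claimed formula.

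The main obstacle I anticipate is not the combinatorial set-up but the careful bookkeeping of the factor-of-$3$ and the factors of $2$: I must be sure that using $|{\cal OGT}_{j+2,2}|$ (which already counts one-component galled trees with the two reticulation children fixed as specific leaves) correctly absorbs the over/undercounting relative to the galled-network case, so that $B_3$ of Proposition~\ref{prop5.6} is replaced by $D_3$ with the extra $\frac{1}{3}$ and the shifted polynomial $j(j+1)(2j+1)$ in place of $(j+1)^2(2j+3)$. Once $D_3$ and $D_5$ are verified, adding them yields the stated expression for $|{\cal GT}_{n,2}|$, since $D_5$ contributes precisely the two trailing terms $n(n-1)n!2^{n-3}-\frac{(2n-1)!n}{3\cdot 2^{n-1}(n-2)!}$.
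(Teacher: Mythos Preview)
Your proposal is correct and follows essentially the same approach as the paper's own proof: the paper also splits into the $G_3$ and $G_5$ cases, computes the $G_3$ contribution using Eqn.~(\ref{2ret_galledtree}) (your $D_3$ is literally the paper's $C_3$), and observes that the $G_5$ contribution equals $A_5$ from Theorem~\ref{theorem_tworet}. The only cosmetic difference is notation; the argument, the lemmas invoked, and the intermediate expressions are identical.
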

\begin{proof}
Since galled trees are galled networks,  the component graph of  each galled tree with two reticulations  are either  $G_3$ and $G_5$ in Figure~\ref{Fig1_Appendix}. Since a galled tree is also a TCN, 
a component of a galled tree contains at least one network leaf.  

For a galled tree with $G_3$ as its component graph,  its top tree-component is a 1-galled tree
with two distinguished reticulations; and  each of its bottom two tree-components is a phylogenetic tree with at least one leaf. Therefore, by Eqn.~(\ref{2ret_galledtree}), the number of galled trees for this case is: 
\begin{eqnarray*}
  C_3&=&\sum^{n-2}_{j=1} {n\choose j} \frac{(2(j+2)-2)!}{3\cdot 2^{j+2-1}(j+2-3)!} \frac{(2n-2j-2)!}{2^{n-j-1}(n-j-1)!}\\
 &=&\frac{n!}{ 3\cdot 2^n}\sum^{n-2}_{j=1} {2j\choose j}{2n-2j\choose n-j} 
 \frac{j(j+1)(2j+1)}{(2n-2j-1)}.
\end{eqnarray*}

Consider a galled tree. If its component graph is $G_5$,  the top tree-component is  then a 1-galled tree with a distinguished reticulation that  contains at least  one network leaf; and the  bottom two tree-components form a galled tree with one reticulation.  Since every TCN is also a galled tree if it contains only one reticulation, 
the number of galled trees for this case is equal to $A_5$ that was calculated in 
the proof of Theorem~\ref{theorem_tworet}.

Summing $C_3$ and $A_5$, we obtain the count result.
\qed\end{proof}

 Lastly, we point out that the following counting problems are open:
\begin{itemize}
  \item How to count normal networks with two reticulations on $n$ taxa?
  \item  How to count arbitrary RPNs with two reticulations on $n$ taxa?
\end{itemize}

\subsection{Counting one-component RPNs}

 Corollary~\ref{corollary3.6} presents a  formula for the count of one-component galled trees, while Theorem~\ref{thm1} presents a  formula for the count of one-component TCNs. 

Additionally, by Proposition~\ref{prop5.3},  the hierarchy of network classes beyond one-component galled networks collapses  from four  into one.   Since $\mbox{1-}{\cal RPN}_{n, k}=\mbox{1-}{\cal GN}_{n, k}$, the count 
$a_{n, k}$ of arbitrary one-component RPNs can be calculated via the following  recurrence formula \cite[Theorem 3]{Zhang_Rathin_G18}: 
\begin{eqnarray*}
a_{n, k+1}&=&\frac{(n-k)}{(k+1)}(n+k-1)\left( a_{n, k} + a_{n, k-1}\right) \\
   &&  +\frac{n!(n-k)}{2(k+1)} \sum^{k}_{j=1}
   {2j \choose j}\frac{
    (n+1-j) a_{n-j, k-j}-{(n+1-k)} a_{n+1-j, k-j}}{2^j (n+1-j)!}. \nonumber\\
\end{eqnarray*}

Lastly, the following problem is open:
\begin{itemize}
 \item How to count one-component normal networks with $k$ reticulations on $n$ taxa?
\end{itemize}

\section{Conclusion}
Only asymptotic counting results are known for the classes of TCNs and normal networks.  
We have presented few approaches and  formulas for the exact count of TCNs,  galled trees and galled networks.  In addition to the problems posed in Section 5,   the following questions are also open for future study: 
\begin{itemize}
  \item Is there  a simple closed or recurrence formula for the count of TCNs? 
  \item Is there a simple closed or recurrence  formula for the count  of galled trees? 
   \item Is there a simple closed or recurrence formula for the count of normal networks?
\end{itemize}
New approaches seem  to be needed to answer these three problems. 

In Section~\ref{Sect4_TC},  we have shown that a TCN is an expansion of 
a rooted, labeled DAG through the replacement of nodes with TCNs  in which 
the child of each reticulation is a leaf. We also prove that the replacement can be obtained from phylogenetic  trees by insertion of reticulations with a leaf child. Are these structural characterizations of TCNs useful in the study of other aspects of TCNs?

\section*{Authors' contributions}

This manuscript evolved from two different manuscripts, which were authored respectively by GC and LZ, and submitted independently to different journals. This coincidence became known to the Editor of one of the journals, who proposed making a joint paper with the results of both manuscripts. 
The present manuscript is a direct descendant of the paper authored by LZ, which except for Proposition~\ref{thm:maximal} contained all the results presented here. The manuscript authored by GC contained essentially the results found here in Sections~\ref{Sect4_TC} and (partially)~\ref{section5}, including also the aforementioned Proposition~\ref{thm:maximal}.

\section*{Acknowledgments}

GC thanks Francesc Rossell\'o for hinting the proof of Proposition~\ref{thm:maximal}.
LZ thanks Mike Steel for the discussion of this project and  Philippe  Gambette for sharing their manuscript on counting level-$k$ phylogenetic networks.

Research of GC was funded by
FEDER/Ministerio de Ciencia, Innovaci\'on y Universidades/Agencia Estatal de Investigaci\'on project PGC2018-096956-B-C43. Research of LZ was funded by Singapore's Ministry of Education Academic
Research Fund Tier-1 [grant R-146-000-238-114] and the National Research
Fund [grant NRF2016NRF-NSFC001-026].




\bibliographystyle{elsarticle-num} 
\bibliography{Counting_My_references}





\end{document}